\documentclass[10pt,twocolumn]{IEEEtran}
\usepackage{enumerate}
\usepackage{mathrsfs}
\usepackage{amsmath}
\usepackage{amssymb}
\usepackage{amsthm}
\usepackage[labelformat=empty,labelsep=none]{caption}
\usepackage{color}

\captionsetup{font={small}}
\ifCLASSINFOpdf \else \fi
\ifCLASSINFOpdf \else \fi
\hyphenation{op-tical net-works semi-conduc-tor}
\usepackage[pdftex]{graphicx}
\usepackage{algorithm}
\usepackage{algorithmic}
\usepackage{subfigure}
\usepackage{multirow}
\usepackage{multirow}
\usepackage{placeins}
\usepackage{afterpage}
\usepackage{stfloats}
\usepackage{caption}
\usepackage{makecell}

\newtheorem{lemma}{Lemma}

\usepackage{color}
\usepackage{slashbox}
\usepackage[normalem]{ulem}

\definecolor{c}{RGB}{0,112,192}

\begin{document}

\title{Neighbor Discovery for VANET with Gossip Mechanism and Multi-packet Reception}

\author{Zhiqing Wei,~\IEEEmembership{Member,~IEEE,}
	Qian Chen,
	Heng Yang,~\IEEEmembership{Student Member,~IEEE,}\\
	Huici Wu,~\IEEEmembership{Member,~IEEE,}
	Zhiyong Feng,~\IEEEmembership{Senior Member,~IEEE,}
	Fan Ning
	
\thanks{This work is supported by the Beijing Natural Science Foundation (No. L192031) and National Natural Science Foundation of China (No. 61631003).
		
The authors are with Beijing University of Posts and Telecommunications, Beijing, China 100876 (email: \{weizhiqing, fengzy, dailywu\}@bupt.edu.cn).
\emph{Correspondence authors: Zhiqing Wei and Zhiyong Feng.}}}

\maketitle

\begin{abstract}
	Neighbor discovery (ND) is a key initial step of network configuration and prerequisite of vehicular ad hoc network (VANET).
	However, the convergence efficiency of ND is facing the requirements of multi-vehicle fast networking of VANET with frequent topology changes.
	This paper proposes gossip-based information dissemination and sensing information assisted ND with MPR (GSIM-ND) algorithm for VANET.
	GSIM-ND algorithm leverages efficient gossip-based information dissemination in the case of multi-packet reception (MPR).
	Besides, through multi-target detection function of multiple sensors installed in roadside unit (RSU),
	RSU can sense the distribution of vehicles and help vehicles to obtain the distribution of their neighbors.
	Thus, GSIM-ND algorithm leverages the dissemination of sensing information as well.	
	The expected number of discovered neighbors within a given period is theoretically derived and used as the critical metric to evaluate the performance of GSIM-ND algorithm.
	The expected bounds of the number of time slots when a given number of neighbors needs to be discovered is derived as well.
	The simulation results verify the correctness of theoretical derivation.
	It is discovered that GSIM-ND algorithm proposed in this paper can always reach the short-term convergence quickly.
	Moreover, GSIM-ND algorithm is more efficient and stable compared with completely random algorithm (CRA), scan-based algorithm (SBA) and gossip-based algorithm.
	The convergence time of GSIM-ND algorithm is 40\%-90\% lower than that of these existing algorithms for both low density and high density networks.
	Thus, GSIM-ND can improve the efficiency of ND algorithm.
\end{abstract}

\begin{keywords}
	Neighbor discovery, vehicular ad hoc network, gossip mechanism, integrated sensing and communication, multi-packet reception.
\end{keywords}

\section{Introduction}\label{sec_1}
	In vehicular ad hoc network (VANET), the vehicle equipped with on-board unit (OBU) is a mobile road node and belongs to the road user, roadside unit (RSU) is a fixed road node and belongs to the road server.
	OBU is a microwave device used to communicate with RSU.
	RSU plays an important role in the distribution and dissemination of traffic data \cite{1}, \cite{2}, i.e., RSU is a key component of collaborative and distributed applications in VANET \cite{3} and can provide services for road users such as OBUs \cite{4}.
	The applications in VANET, especially road safety application, mainly depend on the communication between OBU and OBU (V2V), OBU and RSU (V2I) \cite{5}-\cite{7}, collectively known as V2X communication.
	
	As a prerequisite for VANET, fast neighbor discovery (ND) is a key initial step of network configuration.
	The convergence efficiency of ND directly affects the performance of the network \cite{9}.
	The nodes that can directly communicate without relay nodes are one-hop neighbors of each other.
	For VANET with frequent changing topology, all the nodes need to discover their one-hop neighbors quickly, establish an effective communication infrastructure connection, and adapt to frequent changing topology \cite{10}.
	Then, the information can be transmitted directly or indirectly through the connections established in the process of ND when a node needs to report emergency or traffic congestion information to other nodes.
	
	In the era of fifth-generation (5G) mobile communication and beyond, with the requirements of high-capacity services in VANETs,
	the communication frequency band will gradually extend to the millimeter wave band.
	Therefore, the directional antenna for the VANETs on millimeter-wave band is being considered to be used in the standardizations such as \cite{11}, \cite{12}.
	The directional antenna can concentrate the energy in a certain beam direction, which results in higher capacity, stronger spatial diversity and larger transmission range.
	In this paper, we consider ND based on directional antenna.
	
	However, the ND with directional antenna faces the challenges of beam alignment and fast convergence.
	To address these challenges, gossip-based information dissemination can be used to assist ND in VANET.
	Since V2X communication uses orthogonal frequency division multiple access (OFDMA) technology \cite{13},
	communication receiver has strong information reception capability.
	Therefore, ND in VANET can be conducted with multi-packet reception (MPR).
	In addition, to guarantee the correctness and efficiency of information dissemination in VANET, some literatures investigated VANET with integrated sensing and communication RSU \cite{14}-\cite{18}.
	In this scenario, both communication and sensing functions are implemented on RSUs.
	Then, i) RSUs with sensing and communication functions not only passively wait for vehicles to upload road traffic data \cite{1}, \cite{2}, but also actively detect road conditions.
	ii) The radar function of RSU can play the role of predictive beamforming design and re-authentication for communication data \cite{14}, \cite{15}.
	iii) RSU can detect vehicles and forward the information to the vehicles such that the problem of information asymmetry caused by equipment inconsistency between different types of vehicles can be avoided \cite{18}.
	Since the prior information of the neighbors can accelerate the convergence of ND \cite{19},
	VANET with sensing function \cite{14}-\cite{18} can take sensing information as the prior information to accelerate ND.
	
	This paper studies gossip-based information dissemination and sensing information assisted ND with MPR (GSIM-ND) in VANET.
	Since there may be blind spots in the sensing range of RSU,
	different convergence conditions of GSIM-ND algorithm for mobile nodes at different locations are given as well.
	It is noted that GSIM-ND algorithm still works well when there is no sensing information or the accuracy of sensing information is not high.
	For the proposed GSIM-ND algorithm, the expected number of discovered neighbors within a given period is derived and adopted as the critical metric to evaluate the performance of GSIM-ND algorithm.
	The simulation results verify the correctness of theoretical derivation.
	It is discovered that GSIM-ND algorithm can always reach the short-term convergence and converge efficiently.
	GSIM-ND algorithm proposed in this paper can adapt to the requirements of multi-vehicle fast networking better than the traditional ND algorithms.
	
	The remainder of this paper is organized as follows.
	Section \ref{sec_2} reviews the related works.
	Section \ref{sec_3} introduces the system model.
	Section \ref{sec_4} gives a detailed design of GSIM-ND algorithm.
	Section \ref{sec_5} derives the expected number of discovered neighbors with GSIM-ND algorithm.
	Section \ref{sec_6} provides simulation results and analysis for the proposed GSIM-ND algorithm.
	Section \ref{sec_7} summarizes this paper.
	The main parameters in this paper are summarized in Table \ref{table1}.
	\begin{table}[h]
		\centering
		\renewcommand\arraystretch{1.5}
		\caption{Table I \\Key notations in this paper}\label{table1}
		\begin{tabular}{p{1.4cm}<{\centering}p{6.6cm}}
			\hline
			Notation&Description\\
			\hline
			$d$&Road width\\
			$L$&Road length\\
			$r$&Communication and sensing radius of nodes\\	
			$\theta$&Beamwidth of directional antenna\\
			$\alpha$&Probability of a mobile node selecting one of non-empty beam directions\\
			$M$&Number of mobile nodes in the network\\
			$\rho$&Density of mobile nodes\\
			$N_{\rm b}$&Number of neighbors in a non-empty beam\\			
			$N_{\rm I}$&Number of common neighbors between two mobile nodes\\
			$N$&Number of a mobile node's neighbors\\
			$k$&Number of different modulation modes\\
			$\overline {t_{\rm u}(n)}$&Expected upper bound of the number of time slots when $n$ neighbors need to be directly or indirectly discovered\\
			$\overline {t_{\rm l}(n)}$&Expected lower bound of the number of time slots when $n$ neighbors need to be directly or indirectly discovered\\
			$\overline {n(t)}$&Expected number of directly or indirectly discovered neighbors within $t$ time slots\\
			\hline
		\end{tabular}
	\end{table}

\section{Related Works}\label{sec_2}
\subsection{ND Algorithm}
	The essence of ND is to directly or indirectly transmit and receive the packets carrying neighbors' identity information among neighbors by \textit{handshake}, establish connections among neighbors, and maintain neighbor information table.
	The optimization goal of ND is to find complete and correct neighbors as soon as possible.

	ND algorithm can be classified into two categories when the directional antenna is adopted.
	One is completely random algorithm (CRA).
	The nodes randomly select transmit mode or receive mode and select a beam direction with a certain probability in each time slot.
	The other is scan-based algorithm (SBA).
	In each time slot, the nodes point the antenna at the corresponding beam direction in turn according to the preset scan sequence and select transmit mode or receive mode randomly or determinately \cite{20}-\cite{24}.
	CRA has a high probability of completing ND within a certain time.
	SBA can complete ND within a certain time deterministically \cite{25}.
	However, SBA requires nodes not to move during a scan period whereas CRA only requires nodes not to move within a time slot which is much shorter than a scan period.
	Therefore, CRA is more suitable for scenarios with high mobility \cite{20}, such as the scenario of VANET.
\subsection{Improvement of ND Algorithm}
	In order to improve the convergence speed of ND,
	existing literatures have carried out research on the prior information and the dissemination mode of information respectively.
	
	In terms of the prior information,
	Burghal \textit{et al.} in \cite{19} discussed the impact of prior information of the set of neighbors on ND.
	The distribution and number of neighbors detected by radar can accelerate ND \cite{8}, \cite{26}, \cite{27}.
	Vasudevan \textit{et al.} in \cite{25} and Sun \textit{et al.} in \cite{28} both found that the unknown number of neighbors would lead to at most a two-factor deceleration in the rate of ND convergence.
	According to the collision information provided by a collision feedback mechanism, Khalili \textit{et al.} in \cite{29} estimated the number of the node's neighbors such that they adjusted the transmission probability, which accelerates the convergence of ND.
	Vasudevan \textit{et al.} in \cite{30} estimated the number of neighbors when there was no prior information of the number of neighbors.
	They found that ND was faster as the estimated value of the number of neighbors accurate.
	Therefore, VANET with sensing function \cite{14}-\cite{18} can take the sensing information as the prior information to accelerate the convergence of ND.
	
	In terms of the dissemination mode of information,
	the ND algorithm mentioned above can only discover neighbors through direct \textit{handshake} with one neighbor at a time, which belongs to the direct ND algorithm with SPR.
	Vasudevan \textit{et al.} in \cite{30} proposed a gossip-based ND algorithm that can realize ND through direct or indirect \textit{handshake}.
	Compared with the direct ND algorithm, the gossip-based ND algorithm could greatly accelerate the generation of new \textit{handshake} through the \textit{handshake} that has been completed.
	Russell \textit{et al.} in \cite{31} analyzed ND with MPR.
	Compared with SPR, MPR enables nodes to \textit{handshake} with multiple neighbors at a time, reducing collisions and enabling more efficient ND.
	Vasudevan \textit{et al.} in \cite{30} and \cite{31} studied ND algorithm in the scenario of wireless ad hoc networks.
	Considering that the environmental characteristics of VANET are different from the other wireless networks, this paper studies GSIM-ND algorithm for VANET.
\subsection{ND Algorithm for VANET}
	For VANET, existing literatures have carried out ND research to improve the probability of successful neighbor \textit{handshake} by means of trajectory prediction, vehicle clustering, packet loss estimation, etc.
	
	In terms of trajectory prediction,
	Liu \textit{et al.} in \cite{32} predicted the trajectories of vehicle mobile nodes based on Kalman filter and studied the method for vehicle mobile nodes to periodically update the neighbors.
	However, a simply broadcast mechanism was adopted to disseminate new neighbor information in the process of neighbor update.
	Too many broadcast packets cause high probability of collisions, which is not beneficial to the fast convergence of neighbor update.
	In terms of vehicle clustering,
	Nahar \textit{et al.} in \cite{33} limited the dissemination of neighbor information within a cluster of vehicles and probabilistic forwarded or discarded packets,
	which reduced unnecessary information broadcasting and avoided storm information.
	In terms of packet loss estimation,
	Ducourthial \textit{et al.} in \cite{34} proposed an adaptive ND algorithm to timely detect the neighbors without consuming too much network resources in the VANET with high node density, high packet collision, and high packet loss rate.
	This algorithm relied on cooperative packet loss estimation, whose operation lacked independence and stability.
	
	However, the above algorithms can not ensure the completeness of ND and can not guarantee short-term convergence.
	This paper proves that the proposed GSIM-ND algorithm can always reach the short-term convergence quickly, which is proved by deriving the expected number of discovered neighbors within a given period.
	
\section{System Model}\label{sec_3}
	We consider a VANET consisting of RSUs and mobile nodes as shown in Fig. \ref{fig_1}(a).
	Fixed RSUs are connected with optical fiber and their locations are pre-stored in a common database accessible to all mobile nodes.
	All the RSUs and mobile nodes have unique identities.
	
	Most commercial off-the-shelf V2X communication radio devices such as DSRC are lack of ability of beamforming.
	Actually, the studies on DSRC based V2X communication with omni-directional antenna were solid and instructive for the optimization of the off-the-shelf V2X communication technologies \cite{35}-\cite{37}.
	In this paper, the directional antenna for the VANETs is applied to achieve high capacity, strong spatial diversity and large transmission range.
	
	The sensing equipment installed in the RSU includes radar.
	Compared with communication signal, radar sensing signal is more prone to fading and is more susceptible to various obstacles.
	Thus, the sensing range is smaller than the communication range as shown in Fig. \ref{fig_1}(b).
	The inconsistence of the sensing range and communication range results in the incomplete sensing information and further affects the convergence of ND.
	Through the coordination of various sensing devices and the adjustment of equipment power,
	the sensing radius of RSU is defined by $r$, which is small than the communication radius of RSU and larger than the road width $d$.
	The distance between two neighboring RSUs, defined by $s$, is greater than $2r$ when the sensing range of neighboring RSUs cannot be completely covered, i.e., there is a blind area on the road, which also results in incomplete sensing information.
	The longitudinal component of $s$ along the road is defined by $s_x$.
	As shown in Fig. \ref{fig_1}(b),
	the mobile node A within the sensing range can obtain complete sensing information from RSU.
	The mobile node B within the communication range and outside the sensing range can only obtain incomplete sensing information.
	
	To deal with the problem of incomplete sensing information,
	two different convergence conditions are set for the mobile nodes that obtain sensing information of different completeness degrees in Section \ref{sec_4}-A. 
	The mobile nodes with complete sensing information can converge more quickly.
	The mobile nodes with incomplete sensing information can still achieve convergence within a limited time, i.e., the algorithm does not fall into an infinite loop.
	
	\begin{figure}[h]
		\centering
		\includegraphics[width=0.5\textwidth]{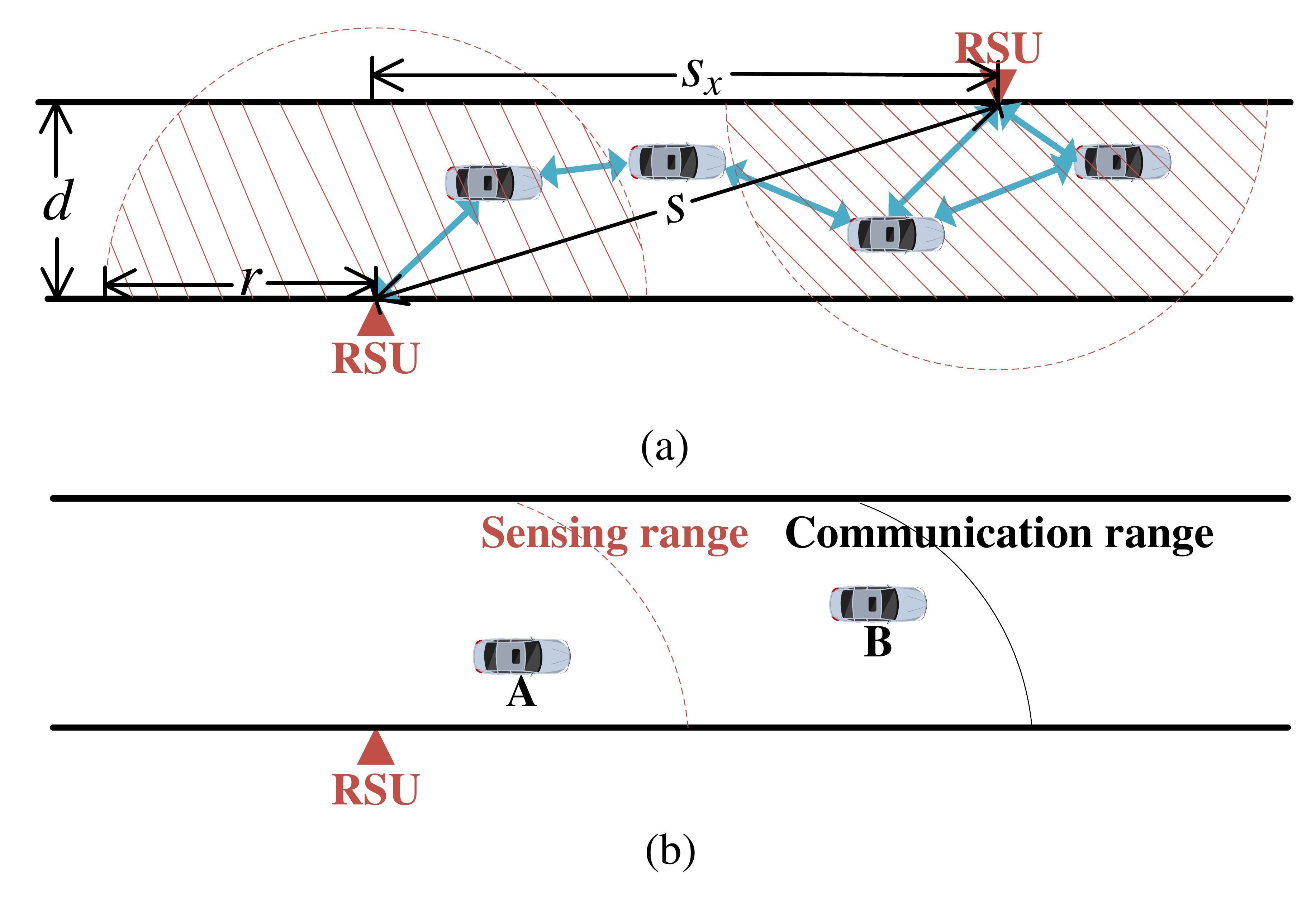}
		\caption{Fig. 1. Scenario of VANET.}
		\label{fig_1}
	\end{figure}

\section{Design of GSIM-ND Algorithm}\label{sec_4}
\subsection{Design Principles}
	This paper proposes GSIM-ND algorithm for VANET.
	Nine principles including 1) \textit{handshake}, 2) transceiver antenna, 3) time synchronization, 4) schedule of antenna beam direction, 5) schedule of transceiver state, 6) information dissemination mode, 7) information reception mode, 8) prior information, and 9) convergence conditions, are considered.
	Detailed descriptions are as follows.
	\begin{figure*}[b]
		\centering
		\includegraphics[width=1\textwidth]{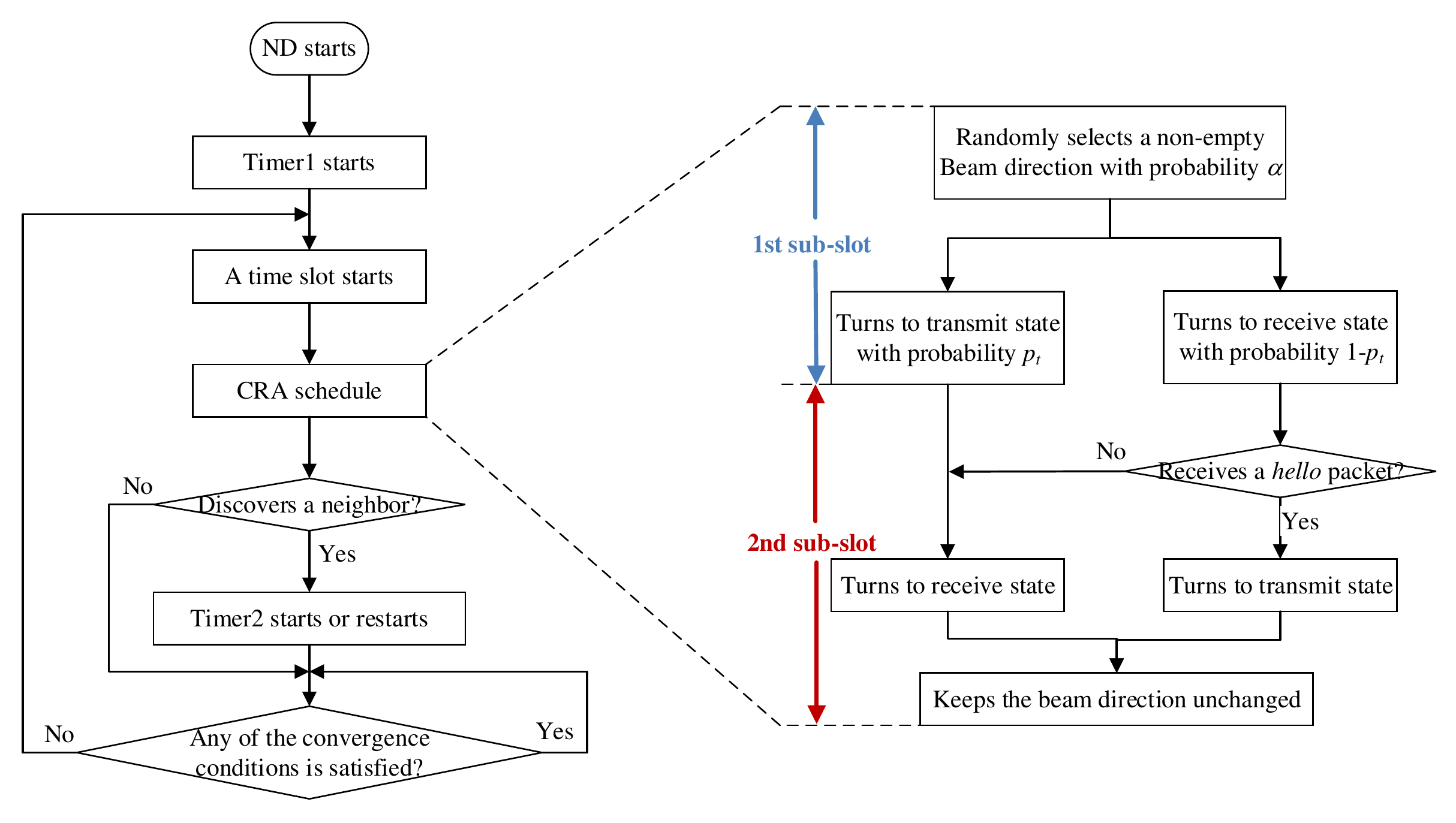}
		\caption{Fig. 3. GSIM-ND with CRA.}
		\label{fig_3}
	\end{figure*}
	\begin{itemize}
		\item [1)]\textbf{\textit{Handshake}:}
		Hybrid \textit{handshake} combining one-way and two-way \textit{handshake} methods is considered.
		A mobile node can identify a neighbor by receiving a \textit{hello} packet or a \textit{feedback} packet \cite{38}.
		\item [2)]\textbf{Transceiver Antenna:}
		The directional antenna for VANETs is applied in this paper.
		However, the omni-directional antenna can be regarded as a directional antenna with only one beam.
		The beamwidth of the directional antenna is defined by $\theta$ ($\theta \in ( 0,2\pi ]$).
		If $\theta$ is set to $2\pi$, the analysis with directional antenna could easily extend to the analysis with omni-directional antenna.
		Therefore, the designed algorithm and the analysis in this paper are still applicable to the scenario with omni-directional antenna.
		\item [3)]\textbf{Time Synchronization:}
		All the RSUs and mobile nodes are synchronized \cite{39}.
		Each time slot is divided into two sub-slots.
		The first sub-slot is used for mobile nodes to actively transmit \textit{hello} packet to neighbor mobile node.
		The second sub-slot is used for neighbor mobile node to transmit \textit{feedback} packet.
		\item [4)]\textbf{Schedule of Antenna Beam Direction:}
		For the nodes applying directional antenna, the antenna beam direction of the nodes should be scheduled for GSIM-ND.
		Detailed schedule of antenna beam direction is specified in Section \ref{sec_4}-B.
		\item [5)]\textbf{Schedule of Transceiver State:}
		Half-duplex is applied at each mobile node, i.e., a mobile node can either transmit or receive at a time.
		Detailed schedule of the transceiver state is specified in Section \ref{sec_4}-B.
		\item [6)]\textbf{Information Dissemination Mode:}
		The information in the packet determines what a mobile node can discover from
		the packet it receives, thereby affecting ND.
		In the case of low network capacity,
		each packet only contains information of the transmitter itself,
		i.e., a mobile node can only obtain its neighbors’ information and discover its neighbors by directly receiving the packets from its neighbors.
		Considering the frequent changing topology of VANET, some packets can carry more information, such as the information of the transmitter's neighbors.
		In this paper, gossip-based algorithm can be applied.
		A mobile node not only directly discovers the transmitter, but also indirectly discovers the common neighbors between the mobile node and the transmitter.		
		\item [7)]\textbf{Information Reception Mode:}
		A collision occurs when	a mobile node receives multiple packets at a time and is
		unable to demodulate each of them.
		In the case where efficient reception equipment cannot be provided, the mobile nodes can only receive one packet at a time with SPR.
		In this paper, the multi-channel communication is feasible, which enables MPR.
		In the case of MPR, the mobile nodes can receive multiple packets at a time without collisions.
		\item [8)]\textbf{Prior Information:}
		Since the prior information of the neighbors can accelerate the convergence of ND \cite{19},
		VANET with sensing function \cite{14}-\cite{18} can take sensing information as the prior information to accelerate ND.
		In this paper, RSU can obtain sensing information through sensing devices such as radar \cite{14}.
		Then, RSU broadcasts the sensing information to the road mobile nodes within its coverage area through communication devices.
		The sensing information that RSU can obtain refers to the distribution of mobile nodes within the sensing range of RSU.
		Thus, the mobile nodes equipped with OBUs can obtain the number of their neighbors within each beam when they receive the sensing information broadcast by RSUs.
		\item [9)]\textbf{Convergence Conditions:}
		For mobile nodes that can obtain the number of neighbors from the sensing information,
		ND can converge when the number of discovered neighbors reaches the number of neighbors.
		For mobile nodes that can not obtain the number of neighbors from the sensing information,
		since the unknown number of neighbors would lead to at most a two-factor deceleration in the rate of ND convergence \cite{25}, \cite{28},
		ND can converge when the time of not discovering any neighbors reaches half of the execution time of ND.
	\end{itemize}

\subsection{GSIM-ND Algorithm}
	Each mobile node has two timers, namely, timer1 and timer2.
	The timer1 starts timing when GSIM-ND algorithm starts to execute.
	The timer2 starts timing when the mobile node successfully discovers a neighbor.
	Then, timer2 restarts timing every time the mobile node successfully discovers a new neighbor.
	Thus, the GSIM-ND algorithm can converge when any of the following convergence conditions is satisfied.
	\begin{figure}[h]
		\centering
		\includegraphics[width=0.5\textwidth]{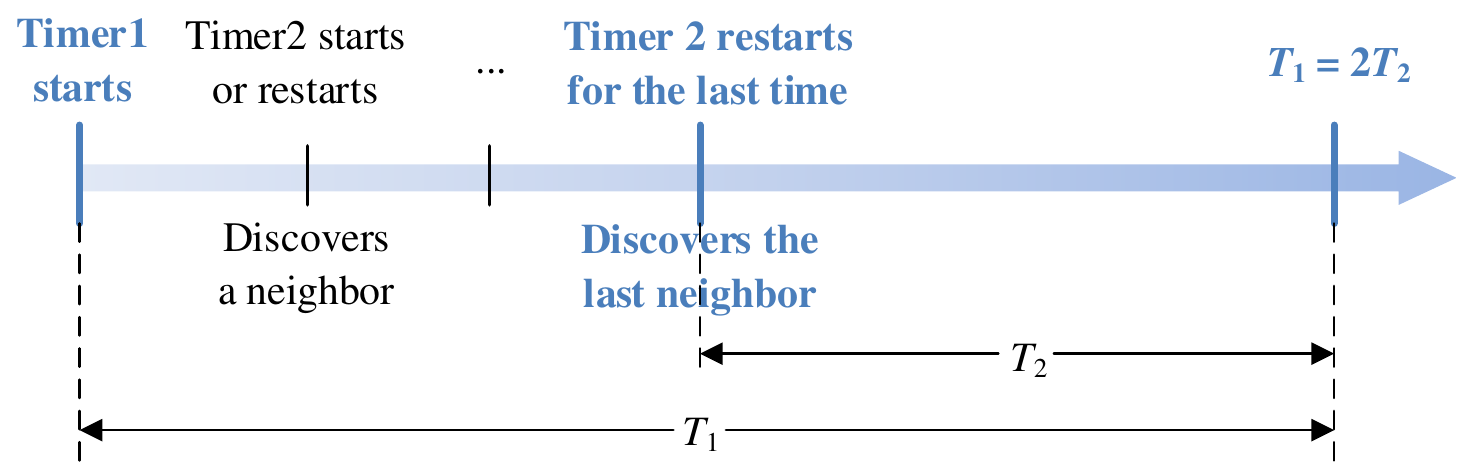}
		\caption{Fig. 2. Time axis of GSIM-ND.}
		\label{fig_2}
	\end{figure}
	\begin{itemize}
		\item [$\bullet $]
		The number of discovered neighbors reaches to the number of neighbors obtained from the sensing information for the mobile nodes with complete sensing information.
		\item [$\bullet $]
		The time by timer2, defined by $T_2$, reaches half of the time by timer1, defined by $T_1$, for the mobile nodes with incomplete sensing information.
	\end{itemize}
	
	CRA is applied to schedule the antenna beam direction and transceiver state in this paper since it is more suitable for the scenarios with high mobility compared with SBA.
	\begin{itemize}
		\item [1)]
		In the first sub-slot, a mobile node randomly selects a non-empty beam direction with probability $\alpha$.
		The mobile node either transmits a \textit{hello} packet with probability $p_{\rm t}$ or turns to receive state with probability $1-p_{\rm t}$ \cite{20}.
		\item [2)]
		In the second sub-slot, the mobile node keeps the beam direction unchanged.
		If the mobile node receives a \textit{hello} packet in the first sub-slot successfully,
		it transmits a \textit{feedback} packet in the second sub-slot.
		Otherwise, it turns to receive state in the second sub-slot \cite{20}.
	\end{itemize}
	
	The mobile node has discovered all the current neighbors when the algorithm reaches convergence.
	Since ND execution can be completed in a very short time period,
	the road mobile node can be approximately considered to be fixed during the time period of ND execution.
	However, the neighbors of the mobile nodes are likely to change as time changes after the convergence of ND.
	The re-discovery of neighbors is necessary.
	This paper focuses on the design and performance problems of the initial ND algorithm.
	With the designed algorithm executes periodically and repeatedly, the correctness of neighbor information with mobility can be guaranteed.
	
\section{Expected Number of Discovered Neighbors within a Given Period}\label{sec_5}
	RSU was only used for communication in the early stage.
	However, it gradually obtained the sensing function \cite{40}.
	For example, Siemens and Bosch have unveiled a connected vehicle collective sensing system that directly connects roadside smart cameras with dual-mode RSUs to deliver traffic and road information to connected vehicles.
	Recently, the integrated sensing and communication RSU has become a consensus \cite{41}.
	As explained in Section \ref{sec_4}-A,
	the sensing information provided by RSUs can be used as the prior information to accelerate the convergence of ND and improve the performance of ND \cite{19}.
	Therefore, to deal with the problem of ND among mobile nodes,
	the GSIM-ND algorithm that leverages the sensing information provided by RSUs is proposed.
	
	The expected number of discovered neighbors within a given period is agreed to be the metric of ND algorithms in the latest 3GPP standard \cite{42}.
	In Section \ref{sec_5},
	the expected number of discovered neighbors within a given period is theoretically derived and used as the metric to evaluate the performance of GSIM-ND algorithm.
	
	In this paper, the assistance of sensing information to ND among mobile nodes is embodied in two aspects.
	On the one hand, the convergence condition of GSIM-ND algorithm is determined by sensing information, which is explained in Section \ref{sec_4}-B.
	On the other hand, the direction and number of non-empty beams are determined by sensing information, which is explained in the derivation of Section \ref{sec_5} and Appendix A.
	Through avoid wasting time on the empty beam by only selecting non-empty beam for ND, the performance of ND can be improved.
	
\subsection{The Probability of Discovering Neighbors at an Arbitrary Time Slot}
	In this paper, we consider GSIM-ND algorithm in the case where there are $k$ different modulation modes for packet.
	Each packet selects one of the modulation modes with probability $\frac{1}{k}$.
	A mobile node can receive at most $k$ packets in each sub-slot without collisions.
	
	If a mobile node receives $n$ packets in a sub-slot ($n\leq k$), there are $k^{n}$ modulation modes for these $n$ packets, $\binom{k}{n}$ modulation modes when these $n$ packets are modulated differently.
	Thus, the probability of no collisions is
	\begin{equation}\label{eq_1}
		{P_{\rm cm}(n,k)} = \frac{\binom{k}{n}}{k^{n}}.
	\end{equation}
	
	\begin{lemma} \label{lemma1}
		The probability of a mobile node successfully receiving $n$ \textit{hello} packets in the first sub-slot is
		\begin{equation}\label{eq_2}
			{p_h(n)} = \alpha(1-p_{\rm t})(\alpha p_{\rm t})^nP_{\rm cm}(n,k)(1-\frac{n}{k}\alpha p_{\rm t})^{N_{\rm b}-n}.
		\end{equation}
		
		The probability of a mobile node successfully receiving $n$ \textit{feedback} packets in the second sub-slot is	
		\begin{equation}\label{eq_3}
			{p_f(n)} = \alpha p_{\rm t}p_{\rm tem}^{n}P_{\rm cm}(n,k)(1-\frac{n}{k}p_{\rm tem})^{N_{\rm b}-n},
		\end{equation}
		where
		\begin{equation}\label{eq_4}
			{p_{\rm tem}} = \sum\limits_{n=1}^{k}{\frac{p_h(n)}{\alpha p_{\rm t}}}.
		\end{equation}
		
		The probability of a mobile node directly discovering any neighbors at an arbitrary time slot is
		\begin{equation}\label{eq_5}
			{P_{\rm s}} = \sum\limits_{n=1}^{k}{p_{\rm s}(n)} = \sum\limits_{n=1}^{k}{p_h(n)} + \sum\limits_{n=1}^{k}{p_f(n)}.
		\end{equation}
	\end{lemma}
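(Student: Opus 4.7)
My plan is to prove each of the three probabilities by decomposing the relevant event into independent sub-events, namely the receiver's beam-direction selection, its transmit/receive-mode choice, each potentially transmitting neighbor's analogous choices, and each transmission's modulation draw, and then multiplying the resulting probabilities over the $N_b$ neighbors in the beam.

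For $p_h(n)$ I would reason from the receiver's vantage point. The mobile node successfully receives $n$ hello packets in the first sub-slot only if it has pointed its antenna to the relevant non-empty beam (probability $\alpha$) and is in receive mode (probability $1-p_{\rm t}$); $n$ of the $N_b$ neighbors in that beam have each pointed back and transmitted a hello, contributing $(\alpha p_{\rm t})^n$; the modulations of those $n$ transmissions are pairwise distinct, contributing $P_{\rm cm}(n,k)$ by (1); and none of the remaining $N_b - n$ neighbors corrupts one of the $n$ occupied modulations. For the last factor I would argue that a specific remaining neighbor causes corruption only by simultaneously pointing its beam, transmitting, and drawing one of the $n$ occupied modulations, an event of probability $\frac{n}{k}\alpha p_{\rm t}$; so each contributes $1-\frac{n}{k}\alpha p_{\rm t}$ independently, yielding the stated product form (2).

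Next I would interpret $p_{\rm tem}$ as the unconditional probability that a single tagged hello transmission is correctly decoded at its intended neighbor. Conditioning on the sender being in the transmit-to-that-neighbor state, an event of probability $\alpha p_{\rm t}$, the division $p_h(n)/(\alpha p_{\rm t})$ recovers the conditional probability that the tagged packet is among the $n$ decoded ones, and summing over $n\ge 1$ collects all cases in which decoding succeeds; by exchangeability of the $N_b$ neighbors this also equals the probability that our node's hello is received at any prescribed neighbor. Then $p_f(n)$ follows the same template as $p_h(n)$ with two substitutions: the front factor $\alpha(1-p_{\rm t})$ becomes $\alpha p_{\rm t}$, since the node must itself have sent a hello in sub-slot 1 to receive feedbacks in sub-slot 2; and each responding neighbor's transmit factor $\alpha p_{\rm t}$ is replaced by $p_{\rm tem}$, since a feedback is sent only if the neighbor actually decoded our hello. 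The modulation-distinctness and non-interference arguments carry over verbatim with $\alpha p_{\rm t}$ replaced by $p_{\rm tem}$, giving (3). Finally, $P_{\rm s}$ combines additively because the sub-slot-1 events ``in receive mode'' and ``transmitted a hello'' are disjoint, so a node never collects hellos and feedbacks in the same time slot; summing over $n$ yields (5).

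The main obstacle will be the non-interference factor for the $N_b - n$ non-selected neighbors: its product form depends on independent modulation draws and on an appropriate interpretation of the decoded-count event. To justify $(1-\frac{n}{k}\alpha p_{\rm t})^{N_b-n}$ I would need to make the event being computed precise, namely that each of the $n$ chosen tagged transmitters is decoded while no remaining neighbor shares a modulation with any of them, and carefully verify that the remaining neighbors' marginal ``non-collision'' probabilities factorize. A secondary subtlety is the implicit independence assumption across neighbors' beam choices and modulation draws; I would state this as a modelling hypothesis at the outset so that the product expansion is unambiguous.
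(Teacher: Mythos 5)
Your proposal is correct and follows essentially the same decomposition as the paper: the receiver's beam/mode factor $\alpha(1-p_{\rm t})$ (resp.\ $\alpha p_{\rm t}$), the $n$ tagged transmitters' factor $(\alpha p_{\rm t})^n$ (resp.\ $p_{\rm tem}^n$) with the modulation-distinctness term $P_{\rm cm}(n,k)$, and the per-neighbor non-interference factor raised to the power $N_{\rm b}-n$, with $p_{\rm tem}$ interpreted as the conditional probability that a transmitted \textit{hello} was decoded. If anything, you are more explicit than the paper about the independence assumptions underlying the product form and about why $P_{\rm s}$ is a disjoint sum, which is a welcome tightening rather than a deviation.
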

	\begin{proof}		
		As illustrated in Fig. \ref{fig_4}, the notations are given as follows.
		\begin{figure}[h]
			\centering
			\includegraphics[width=0.5\textwidth]{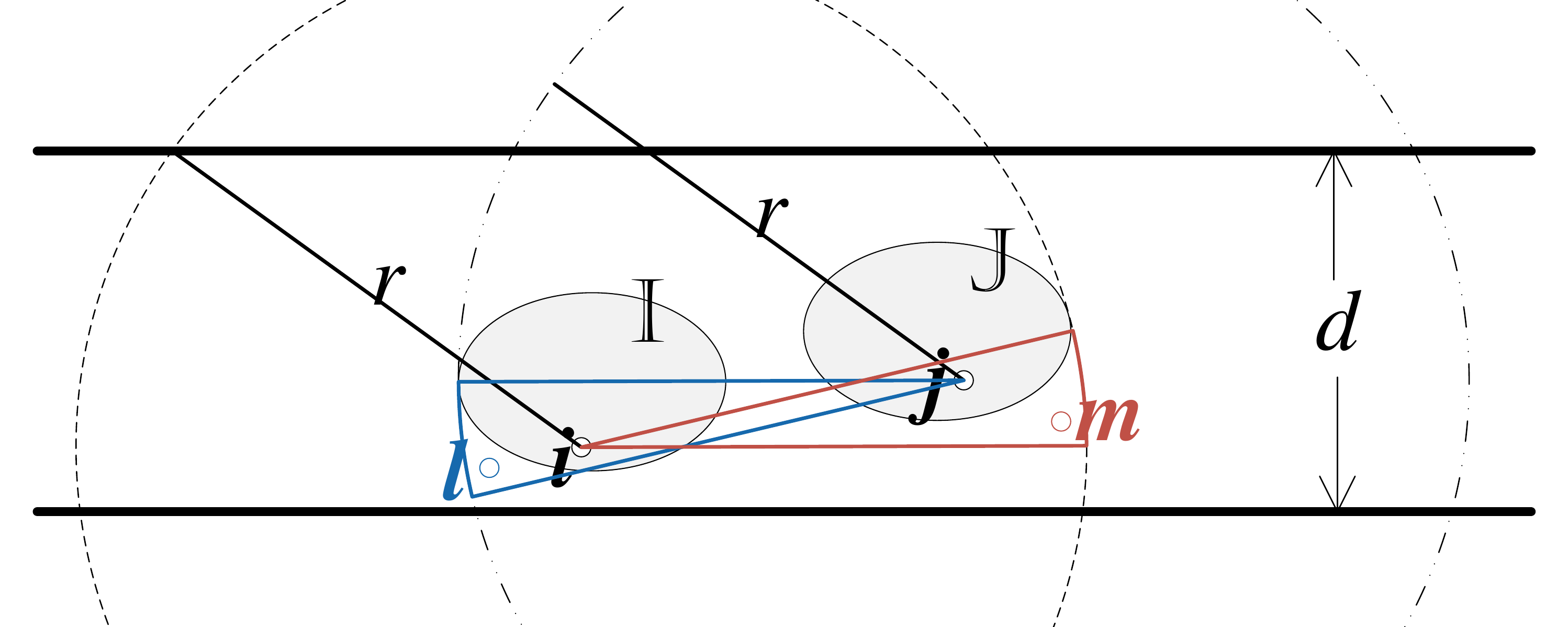}
			\caption{Fig. 4. Neighbor set of mobile nodes.}
			\label{fig_4}
		\end{figure}	
		\begin{itemize}
			\item [1)]
			Neighbor set $\mathbb{J}$: The mobile node \textit{i} has a neighbor set $\mathbb{J}$, which contains $n$ neighbors.
			\item [2)]
			Mobile node \textit{j}: The mobile node \textit{j} is an arbitrary neighbor of the mobile node \textit{i} and belongs to the neighbor set $\mathbb{J}$.
			\item [3)]
			Mobile node \textit{m}: The mobile node \textit{m} is an arbitrary neighbor of the mobile node \textit{i} in the current beam direction of the mobile node \textit{i} and does not belong to the neighbor set $\mathbb{J}$.
			\item [4)]
			Neighbor set $\mathbb{I}$: The mobile node \textit{j} has a neighbor set $\mathbb{I}$, which contains $n$ neighbors.
			The mobile node \textit{i} belongs to the neighbor set $\mathbb{I}$.
			\item [5)]
			Mobile node \textit{l}: The mobile node \textit{l} is an arbitrary neighbor of the mobile node \textit{j} in the current beam direction of the mobile node \textit{j} and does not belong to the neighbor set $\mathbb{I}$.
		\end{itemize}
		
		Then, we derive the probabilities of receiving \textit{hello} packets and \textit{feedback} packets in turn according to the GSIM-ND algorithm.
		
		\textbf{Step 1: The probability of receiving \textit{hello} packets}
		
		The probability that the mobile node \textit{i} is receiving and the beam direction is aligned with the entire neighbor set $\mathbb{J}$ in the first sub-slot is
		\begin{equation}\label{eq_6}
			{p_i} = \alpha(1-p_{\rm t}),
		\end{equation}
		where $\alpha$ is the probability of the mobile node selecting one of non-empty beam directions, $p_{\rm t}$ is the probability of the mobile node transmitting a \textit{hello} packet.
		
		The probability of the entire neighbor set $\mathbb{J}$ transmitting \textit{hello} packet with different modulation modes and the beam direction being aligned with the mobile node \textit{i} in the first sub-slot is
		\begin{equation}\label{eq_7}
			{p_{\mathbb{J}}} = (\alpha p_{\rm t})^nP_{\rm cm}(n,k).
		\end{equation}
			
		The probability of the mobile node \textit{m} not causing interference to \textit{hello} packets transmitted by the entire neighbor set $\mathbb{J}$ in the first sub-slot is
		\begin{equation}\label{eq_8}
			{p_m} = (1-\frac{n}{k}\alpha p_{\rm t})^{N_{\rm b}-n},
		\end{equation}
		where $N_{\rm b}$ is the number of neighbors in a non-empty beam of the mobile node \textit{i}.
		To calculate $P_{\rm s}$, the expectation of $N_{\rm b}$ is derived in the Appendix A.
		
		The probability of the mobile node \textit{i} successfully receiving $n$ \textit{hello} packets transmitted by the entire neighbor set $\mathbb{J}$ in the first sub-slot is $p_h(n)={p_i}{p_{\mathbb{J}}}{p_m}$.
		Therefore, $p_h(n)$ can be formulated in (\ref{eq_2}).
		
		\textbf{Step 2: The probability of receiving \textit{feedback} packets}
		
		The probability of the mobile node \textit{j} transmitting \textit{hello} packet and the beam direction being aligned with the entire neighbor set $\mathbb{I}$ in the first sub-slot is
		\begin{equation}\label{eq_9}
			{p_j} = \alpha p_{\rm t}.
		\end{equation}
		
		The probability of the entire neighbor set $\mathbb{I}$ transmitting \textit{feedback} packet with different modulation modes in the second sub-slot is
		\begin{equation}\label{eq_10}
			{p_{\mathbb{I}}} = p_{\rm tem}^nP_{\rm cm}(n,k),
		\end{equation}
		where $p_{\rm tem}$ is show in (\ref{eq_4}).
		
		The probability of the mobile node \textit{l} not causing interference to \textit{feedback} packets transmitted by the entire neighbor set $\mathbb{I}$ in the second sub-slot is
		\begin{equation}\label{eq_11}
			{p_l} = (1-\frac{n}{k}p_{\rm tem})^{N_{\rm b}-n}.
		\end{equation}
		
		The probability of the mobile node \textit{j} successfully receiving $n$ \textit{feedback} packets transmitted by the entire neighbor set $\mathbb{I}$ in the second sub-slot is $p_f(n)={p_j}{p_{\mathbb{I}}}{p_l}$.
		Therefore, $p_f(n)$ can be formulated in (\ref{eq_3}).
				
		\textit{Hello} packet and \textit{feedback} packet both carry the identity information of the transmitter.
		Therefore, $P_{\rm s}$ can be formulated in (\ref{eq_5}).
	\end{proof}
	
	The general formula for the number of time slots required when $n$ neighbors are discovered is $\sum\limits_{v=1}^{n}{\frac{1}{(n-v+1)p}}$,
	where $p$ is the probability of neighbors being discovered in a time slot \cite{43}.
	For GSIM-ND algorithm proposed in this paper, $p$ is not a fixed value,
	but a value that changes with the number of time slots.
	Therefore, there exists an upper bound and a lower bound on the number of required time slots.
	The mobile node has no discovered neighbors when the algorithm just starts to execute.
	Thus, the mobile node cannot discover neighbors indirectly, but can only discover neighbors directly.
	In this case, the probability of neighbors being discovered in a time slot is $P_{\rm s}$,
	which corresponds to the upper bound on the number of required time slots.
	The expected upper bound of the number of time slots when $n$ neighbors are directly or indirectly discovered is \cite{43}
	\begin{equation}\label{eq_12}
		\overline {t_{\rm u}(n)} = \sum\limits_{v=1}^{n}{\frac{1}{(n-v+1)P_{\rm s}}}.
	\end{equation}
	
\subsection{The Probability of Discovering Neighbors at the $t$-th Time Slot}
	In this paper, \textit{feedback} packets carry the identity information of both transmitter and transmitter's neighbors.
	However, \textit{hello} packets only carry identity information of the transmitter itself to reduce the traffic load of the network, which is different from the assumption that all the packets carry the information of transmitter's neighbors in \cite{30}.
	
	\begin{lemma} \label{lemma2}
		The probability of a mobile node directly or indirectly discovering $n$ neighbors of $n_{\rm d}$ undiscovered neighbors at the $t$-th time slot is
		\begin{figure*}[b]
			\hrulefill
			\setcounter{equation}{19}
			\begin{equation}\label{eq_20}
				{P(n,t)} = 
				\left\{
				\begin{array}{l}
					\prod\limits_{u=1}^{t}{\left(1-\sum\limits_{w=1}^{k}{P_{\rm c}(N,w,u)}\right)}, n=0\\
					P(0,t-1)P_{\rm c}(N,1,t)+P(1,t-1)\left(1-\sum\limits_{w=1}^{k}{P_{\rm c}(N-1,w,t)}\right), n=1\\
					P(0,t-1)P_{\rm c}(N,n,t)+\sum\limits_{w=1}^{n-1}{\left(P(w,t-1)P_{\rm c}(N-w,n-w,t)\right)}+P(n,t-1)\left(1-\sum\limits_{w=1}^{k}{P_{\rm c}(N-n,w,t)}\right), 1<n\leq k\\
					\sum\limits_{w=1}^{k}{\left(P(n-w,t-1)P_{\rm c}(N-n+w,w,t)\right)}+P(n,t-1)\left(1-\sum\limits_{w=1}^{k}{P_{\rm c}(N-n,w,t)}\right), k<n<N\\
					\sum\limits_{w=1}^{k}{\left(P(N-w,t-1)P_{\rm c}(w,w,t)\right)}+P(N,t-1), n = N
				\end{array}
				\right.
			\end{equation}
		\end{figure*}
		\setcounter{equation}{12}
		\begin{equation}\label{eq_13}
			{P_{\rm c}(n_{\rm d},n,t)} = \frac{\binom{n_{\rm d}}{n}\cdot p_{\rm gs}(n,t)}{\sum\limits_{n=1}^{k}{\left [ \binom{n_{\rm d}}{n}\cdot p_{\rm gs}(n,t) \right ]}+(1-P_{\rm gs}(t))},
		\end{equation}
		where
		\begin{equation}\label{eq_14}
			p_{\rm gs}(n,t) = p_{\rm s}(n)+(1-P_{\rm s})A(n,t),
		\end{equation}
		\begin{equation}\label{eq_15}
			{P_{\rm gs}(t)} = \sum\limits_{n=1}^{k}{p_{\rm gs}(n,t)} = P_{\rm s} + (1-P_{\rm s})\sum\limits_{n=1}^{k}{A(n,t)}.
		\end{equation}
	\end{lemma}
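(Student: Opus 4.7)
The plan is to separate the direct and indirect ND mechanisms, apply the law of total probability at each time slot, and then lift the per-slot analysis to the cumulative quantity $P(n,t)$ via a Markov recursion. First I would establish (\ref{eq_14}) and (\ref{eq_15}). By Lemma~\ref{lemma1}, a slot succeeds in direct discovery of exactly $n$ neighbors with probability $p_{\rm s}(n)$, and the total direct-success probability is $P_{\rm s}$. Indirect discovery (gossip carried on a previously handshaked neighbor's \textit{feedback} packet) is conditioned on the event that no direct discovery occurs in this slot, a disjoint event of mass $1-P_{\rm s}$; let $A(n,t)$ denote the conditional probability that the node nevertheless learns of $n$ hitherto unknown neighbors through the gossip trace accumulated up to slot $t$. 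Adding the two disjoint contributions gives (\ref{eq_14}), and summing over $n=1,\dots,k$ with $\sum_{n=1}^{k}p_{\rm s}(n)=P_{\rm s}$ yields (\ref{eq_15}).

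Next I would derive (\ref{eq_13}) by passing from the unconditional slot-discovery probability $p_{\rm gs}(n,t)$ to $P_{\rm c}(n_{\rm d},n,t)$, the probability that exactly $n$ of the $n_{\rm d}$ currently undiscovered neighbors are identified in the slot. The combinatorial factor $\binom{n_{\rm d}}{n}$ counts the size-$n$ subsets of the $n_{\rm d}$ open neighbors whose discovery the slot could realise, and the denominator --- summing the weights $\binom{n_{\rm d}}{w}p_{\rm gs}(w,t)$ for $w=1,\dots,k$ together with the no-discovery mass $1-P_{\rm gs}(t)$ --- renormalises the resulting per-slot distribution over $\{0,1,\ldots,k\}$ new discoveries, with already-discovered neighbors contributing nothing new by construction.

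I would then prove the recursion (\ref{eq_20}) for $P(n,t)$ by a one-step Markov argument on the number of discovered neighbors. At the start of slot $t$ the node is in state $n-w$ with $N-n+w$ undiscovered neighbors, and it moves to state $n$ by identifying $w$ new neighbors with conditional probability $P_{\rm c}(N-n+w,w,t)$, or stays at $n$ with probability $1-\sum_{w=1}^{k}P_{\rm c}(N-n,w,t)$. The five cases in (\ref{eq_20}) simply record boundary effects: the MPR cap $w\leq k$, the impossibility of overshooting $N$, the absorbing state at $n=N$, and the degenerate ``no neighbor ever discovered'' chain at $n=0$.

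The hard part will be pinning down $A(n,t)$, since the indirect term aggregates over the joint history of completed handshakes, the random number $N_{\rm I}$ of common neighbors that each prior handshake partner shares with the receiver, and the probability that a \textit{feedback} packet arriving in slot $t$ carries a previously unreported identity. This random-intersection bookkeeping is where most of the real analytic work lies; once $A(n,t)$ is in hand, the formulas (\ref{eq_14})--(\ref{eq_20}) fall out from the composition outlined above.
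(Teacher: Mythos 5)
Your overall architecture matches the paper's: the direct/indirect decomposition behind (\ref{eq_14})--(\ref{eq_15}), the combinatorial renormalization over the $n_{\rm d}$ still-undiscovered neighbors behind (\ref{eq_13}), and the one-step Markov case analysis behind the five branches of (\ref{eq_20}) are all exactly the steps the authors take. The problem is that you have deferred the one piece that carries the real content. The paper does not leave $A(n,t)$ as open ``random-intersection bookkeeping''; it closes the system with coupled recursions adapted from the gossip analysis of Vasudevan \emph{et al.}: the probability of directly discovering a given neighbor $j$ within $t$ slots is $D(t)=1-(1-P_{\rm s})^{t}$, the probability of indirectly discovering a given common neighbor within $t$ slots satisfies $I(t)=I(t-1)+\left(1-I(t-1)\right)\sum_{n=1}^{k}A(n,t)$, and
\begin{equation*}
A(n,t)=N_{\rm I}\left(D(t-1)+\left(1-D(t-1)\right)I(t-1)\right)p_f(n),
\end{equation*}
with $I(1)=0$ and $A(n,1)=0$. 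The logic is that a common neighbor $u$ can be relayed to node $i$ in slot $t$ only if the relaying node $j$ has itself already discovered $u$ by slot $t-1$, directly or indirectly, and the relay succeeds only through a \textit{feedback} packet --- hence the factor $p_f(n)$ from Lemma \ref{lemma1} rather than the full $P_{\rm s}$, because in this paper only \textit{feedback} packets carry neighbor lists, unlike the reference model where every packet does. The factor $N_{\rm I}$ is the expected number of common neighbors, supplied by Appendix B. Without this closed-form recursion, (\ref{eq_14})--(\ref{eq_15}) are merely definitions rather than computable quantities, so your argument is incomplete at precisely the point you yourself flagged as the hard part. A secondary, smaller slip: you describe $A(n,t)$ as a probability \emph{conditional} on no direct discovery occurring in the slot, whereas in the paper it is the unconditional probability of indirectly discovering the entire common-neighbor set at slot $t$; the factor $(1-P_{\rm s})$ in (\ref{eq_14}) is there to avoid double counting the two discovery routes, not to define $A(n,t)$ as a conditional law.
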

	
	\begin{proof}
		As illustrated in Fig. \ref{fig_5}, the notations are given as follows.
		\begin{figure}[h]
			\centering
			\includegraphics[width=0.5\textwidth]{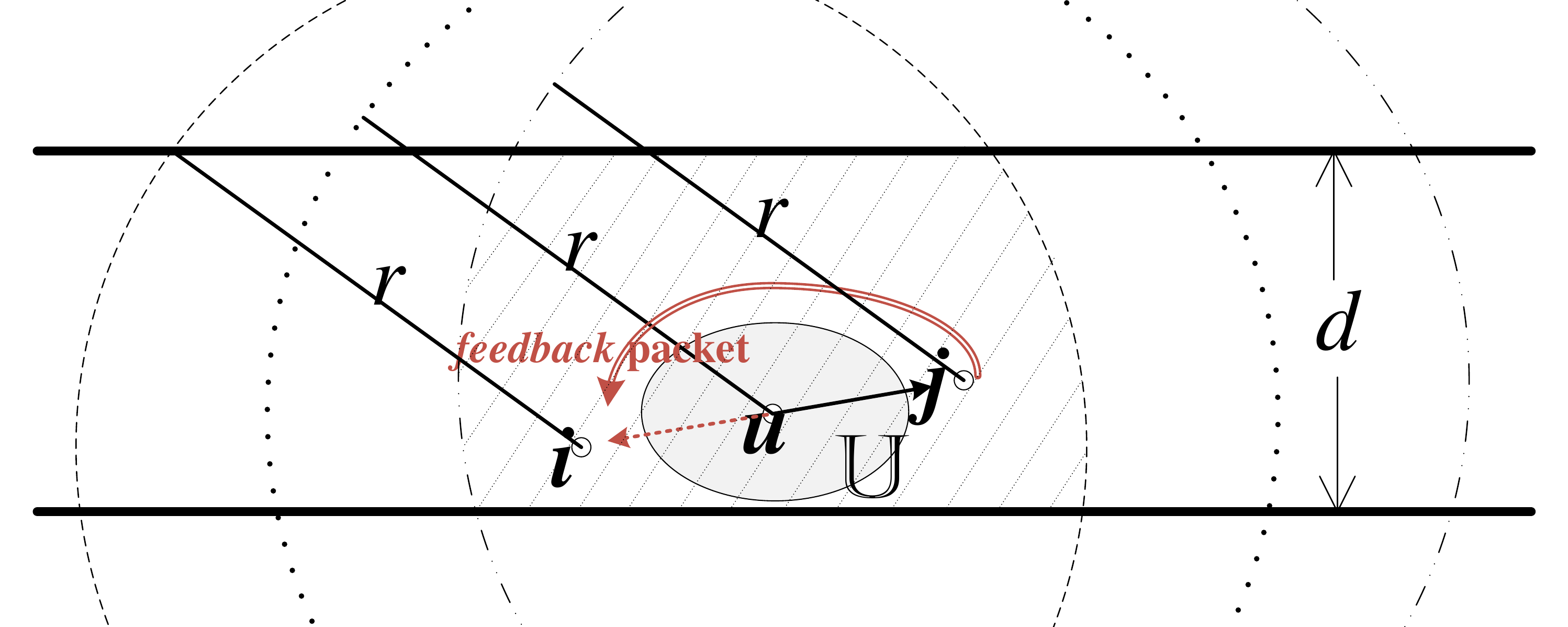}
			\caption{Fig. 5. Common neighbor set of mobile nodes.}
			\label{fig_5}
		\end{figure}	
		\begin{itemize}
			\item [1)]
			Neighbor set $\mathbb{U}$: The mobile node \textit{i} and \textit{j} have a common neighbor set $\mathbb{U}$, including $n$ neighbors of \textit{i} and \textit{j}.
			\item [2)]
			Mobile node \textit{u}: The mobile node \textit{u} is an arbitrary node that belongs to the neighbor set $\mathbb{U}$.
			The mobile node \textit{u} has been discovered by the mobile node \textit{j}.
		\end{itemize}
		
		Therefore, the mobile node \textit{i} can not only directly discovers \textit{j} but also indirectly discovers \textit{u} when \textit{i} receives a \textit{feedback} packet from \textit{j}.
	
		The probability of the mobile node \textit{i} directly discovering a neighbor \textit{j} within $t$ time slots is \cite{30}
		\begin{equation}\label{eq_16}
			{D(t)} = 1-(1-P_{\rm s})^t.
		\end{equation}
	
		The probability of the mobile node \textit{i} indirectly discovering a neighbor \textit{u} within $t$ time slots is defined as $I(t)$.
		The probability of the mobile node \textit{i} indirectly discovering the entire neighbor set $\mathbb{U}$ at the $t$-th time slot is defined as $A(n,t)$.
		Then, $I(t)$ and $A(n,t)$ are formulated as \cite{30}
		\begin{equation}\label{eq_17}
			{I(t)} = I(t-1)+\left(1-I(t-1)\right)\sum\limits_{n=1}^{k}{A(n,t)},
		\end{equation}		
		\begin{equation}\label{eq_18}
			{A(n,t)} = N_{\rm I}\left(D(t-1)+\left(1-D(t-1)\right)I(t-1)\right)p_f(n),
		\end{equation}
		where $N_{\rm I}$ is the number of common neighbors between the mobile node \textit{i} and mobile node \textit{j}, $I(1)=0$, and $A(n,1)=0$ \cite{30}.
		To calculate $P_{\rm gs}$, the expectation of $N_{\rm I}$ is derived in the Appendix B.
		
		The mobile nodes either discover their neighbors directly or indirectly.
		Therefore, the probability of a mobile node directly or indirectly discovering $n$ neighbors at the $t$-th time slot can be formulated in (\ref{eq_14}).
		The probability of a mobile node directly or indirectly discovering any neighbors at the $t$-th time slot can be formulated in (\ref{eq_15}).
		Supposing that there are $n_{\rm d}$ undiscovered neighbors of a mobile node ($n_{\rm d} \geq n$), $P_{\rm gs}(n,t)$ can be normalized to $P_{\rm c}(n_{\rm d},n,t)$ in (\ref{eq_13}).
	\end{proof}
	
	As more and more neighbors have been discovered by mobile nodes,
	the mobile nodes can gradually accelerate ND by indirectly discovering neighbors.
	In this case, the probability of neighbors being discovered in a time slot is $P_{\rm gs}$.
	$P_{\rm gs}$ gradually increases to a stable value and then remains unchanged.
	$P_{\rm gs}(\infty)$ corresponds to the lower bound on the number of required time slots.
	The expected lower bound of the number of time slots when $n$ neighbors need to be directly or indirectly discovered is \cite{43}
	\begin{equation}\label{eq_19}
		\overline {t_{\rm l}(n)} = \sum\limits_{v=1}^{n}{\frac{1}{(n-v+1)P_{\rm gs}(\infty)}}.
	\end{equation}
	
	The mobile node can only discover its neighbors directly when the gossip mechanism is not adopted.
	In this case, the upper bound is reachable.
	Since the mobile node has no neighbors already discovered at the beginning,
	it can not indirectly discover new neighbors.
	The gossip mechanism does not make sense when the algorithm has just started.
	Thus, the lower bound can only be approached and cannot be reached.
	
\subsection{The Probability of Discovering Neighbors within $t$ Time Slots}
	 The number of the mobile node's neighbors is $N$.
	\begin{lemma} \label{lemma3}
		The probability of a mobile node directly or indirectly discovering $n$ neighbors of all the undiscovered neighbors within $t$ time slots is formulated in (\ref{eq_20}).
				
		The expected number of directly or indirectly discovered neighbors within $t$ time slots is
		\setcounter{equation}{20}
		\begin{equation}\label{eq_21}
			{\overline {n(t)}}=\sum\limits_{v=1}^{N}{vP(v,t)}.
		\end{equation}
	\end{lemma}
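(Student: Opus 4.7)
The plan is to view the number of directly or indirectly discovered neighbors as a discrete-time Markov chain on the state space $\{0, 1, \ldots, N\}$, where the state at time $t$ is the cumulative count of discovered neighbors. Since Lemma \ref{lemma2} already supplies the per-slot transition probability $P_{\rm c}(n_{\rm d}, n, t)$ of discovering exactly $n$ of $n_{\rm d}$ remaining undiscovered neighbors at slot $t$, the desired recursion for $P(n,t)$ will follow by conditioning on the state at time $t-1$ and invoking the law of total probability. Equation (\ref{eq_21}) for $\overline{n(t)}$ is then immediate from the definition of expectation once the distribution $P(\cdot,t)$ is in hand.

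First I would write the Chapman--Kolmogorov style recursion: the node reaches state $n$ at time $t$ iff it was in some state $w$ at time $t-1$ and then jumped to $n$ at slot $t$. Because the MPR constraint caps the number of newly discovered neighbors per slot at $k$, only $w \in \{\max(0,n-k), \ldots, n\}$ contribute, and $P_{\rm c}$ must be evaluated with $n_{\rm d} = N - w$ undiscovered neighbors. This truncation, combined with the boundary behavior at $n=0$ and $n=N$, is exactly what produces the five separate cases in (\ref{eq_20}): for $n=0$ the node must avoid any discovery at every slot $u = 1, \ldots, t$, yielding the product form; for $1 \leq n \leq k$ the previous state ranges over $\{0, \ldots, n\}$; for $k < n < N$ the previous state is restricted to $\{n-k, \ldots, n\}$; and for $n=N$ the chain becomes absorbing.

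The delicate point is getting the self-loop (``remain in state $n$'') term right. When $n < N$, the probability of staying is $1 - \sum_{w=1}^{k} P_{\rm c}(N-n, w, t)$, since $P_{\rm c}$ is normalized over the events of discovering at least one new neighbor in that slot. In contrast, when $n = N$ there are no undiscovered neighbors left, so the retention probability is simply $1$; this is precisely why the $n=N$ line of (\ref{eq_20}) drops the $1 - \sum_w P_{\rm c}$ factor and carries $P(N, t-1)$ with coefficient one. Assembling the five cases then gives (\ref{eq_20}), and (\ref{eq_21}) is an immediate consequence of $\overline{n(t)} = \sum_{v=0}^{N} v\, P(v,t)$ after dropping the vanishing $v=0$ term.

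The hard part will not be any single computation but rather the case-by-case bookkeeping at the boundaries $n=0$, $n=k$, and $n=N$, together with verifying that $n_{\rm d}$ is set consistently in every transition term so that $P_{\rm c}$ is applied to the actual number of undiscovered neighbors at the start of slot $t$. Given Lemmas \ref{lemma1} and \ref{lemma2}, no further probabilistic machinery is needed beyond the Markov decomposition and the law of total probability.
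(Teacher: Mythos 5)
Your proposal is correct and takes essentially the same route as the paper: the paper likewise conditions on the number of neighbors already discovered within $t-1$ time slots and applies the law of total probability with the per-slot transition probabilities $P_{\rm c}(n_{\rm d},n,t)$ from Lemma \ref{lemma2}, enumerating the same five cases ($n=0$, $n=1$, $1<n\leq k$, $k<n<N$, $n=N$) with the same self-loop and absorbing-state terms, and obtaining (\ref{eq_21}) directly from the definition of expectation. Your Markov-chain/Chapman--Kolmogorov framing is just a cleaner packaging of the paper's case-by-case bookkeeping.
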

	\begin{proof}
		We divide the discussion into the following cases.
		
		\textbf{Case 1: $n=0$}
		
		The probability of a mobile node not discovering any neighbors of all the undiscovered neighbors at the $u$-th time slot is $P_u = 1-\sum\limits_{w=1}^{k}{P_{\rm c}(N,w,u)}$.
		The probability of a mobile node not discovering any neighbors of all the undiscovered neighbors within $t$ time slots is $\prod\limits_{u=1}^{t}{P_u}$.
		
		\textbf{Case 2: $n=1$}
		
		When no neighbor is discovered within $t-1$ time slots, the probability of discovering one neighbor at the $t$-th time slot is $P_{\rm c}(N,1,t)$.
		
		When one neighbor is discovered within $t-1$ time slots, the probability of not discovering any neighbors at the $t$-th time slot is $1-\sum\limits_{w=1}^{k}{P_{\rm c}(N-1,w,t)}$.
		
		\textbf{Case 3: $1<n\leq k$}
		
		When no neighbor is discovered within $t-1$ time slots, the probability of discovering $n$ neighbors of all the undiscovered neighbors at the $t$-th time slot is $P_{\rm c}(N,n,t)$.
		
		When $w$ neighbors are discovered within $t-1$ time slots, the probability of discovering $n-w$ neighbors of the $N-w$ undiscovered neighbors at the $t$-th time slot is $P_{\rm c}(N-w,n-w,t)$.
		
		When $n$ neighbors are discovered within $t-1$ time slots, the probability of not discovering any neighbors at the $t$-th time slot is $1-\sum\limits_{w=1}^{k}{P_{\rm c}(N-n,w,t)}$.
		
		\textbf{Case 4: $k < n < N$}
						
		When $n-w$ neighbors are discovered within $t-1$ time slots, the probability of discovering $w$ neighbors of the $N-n+w$ undiscovered neighbors at the $t$-th time slot is $P_{\rm c}(N-n+w,w,t)$.
		
		When $n$ neighbors are discovered within $t-1$ time slots, the probability of not discovering any neighbors at the $t$-th time slot is $1-\sum\limits_{w=1}^{k}{P_{\rm c}(N-n,w,t)}$.
		
		\textbf{Case 5: $n = N$}
		
		When $N-w$ neighbors are discovered within $t-1$ time slots, the probability of discovering $w$ neighbors of the $w$ undiscovered neighbors at the $t$-th time slot is $P_{\rm c}(w,w,t)$.
		
		When $N$ neighbors are discovered within $t-1$ time slots, the probability of not discovering any neighbors at the $t$-th time slot is $1$.
		
		Overall, $P(n,t)$ can be formulated in (\ref{eq_20}).
		
		To calculate $\overline {n(t)}$, the expectation of $N$ is derived in the Appendix B.
	\end{proof}
	
\section{Simulation Results and Analysis}\label{sec_6}
	In this section,
	we randomly scatter vehicles over a finite area.
	The values of main simulation parameters are shown in Table \ref{table2}.
	\begin{table}[h]
		\centering
		\renewcommand\arraystretch{1.5}
		\caption{Table II\\Main Simulation parameters}\label{table2}
		\begin{tabular}{p{2cm}<{\centering}p{1cm}<{\centering}|p{2cm}<{\centering}p{1cm}<{\centering}}
			\hline
			Parameter&Value&Parameter&Value\\
			\hline
			$r$&
			$200$ ${\rm m}$&
			$s_x$&
			$600$ ${\rm m}$
			\\			
			$L$&
			$1000$ ${\rm m}$&
			$d$&
			$60$ ${\rm m}$
			\\
			$\theta$&
			$\frac{\pi}{6}$&
			$B$&
			$12$
			\\
			$p_{\rm t}$&
			$[0.1,0.9]$&
			$M$&
			$[50,1000]$
			\\
			\hline
		\end{tabular}
	\end{table}
	\begin{figure}[h]
		\centering
		\includegraphics[width=0.5\textwidth]{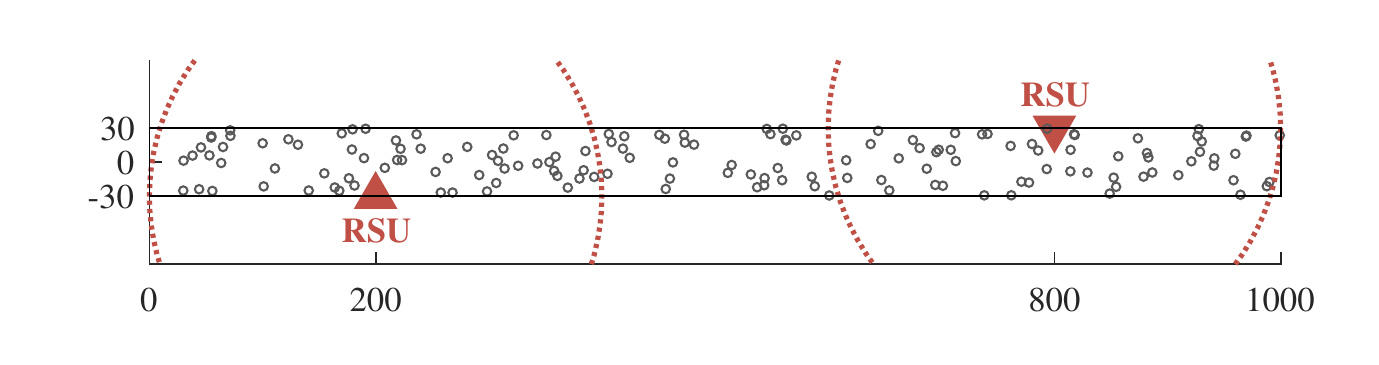}
		\caption{Fig. 6. Randomly distribution of vehicles when $M=150$.}
		\label{fig_6}
	\end{figure}
	\begin{figure}[!t]
		\centering
		\includegraphics[width=0.5\textwidth]{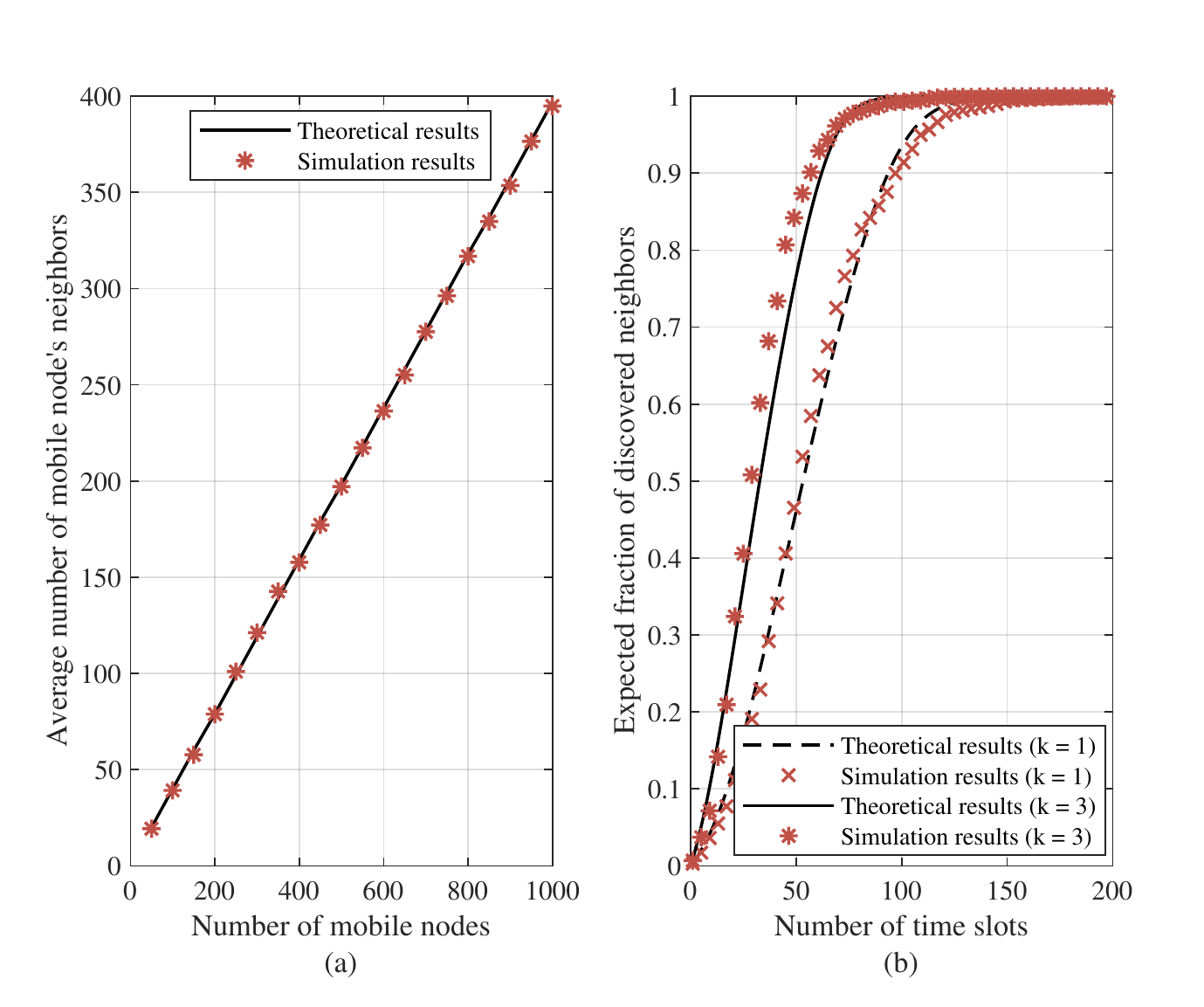}
		\caption{Fig. 7. Theoretical results and simulation results with GSIM-ND algorithm when $M=150$.}
		\label{fig_7}
	\end{figure}
	
\subsection{Verification of Theoretical Derivation}
	\begin{figure}[t]
		\centering
		\includegraphics[width=0.5\textwidth]{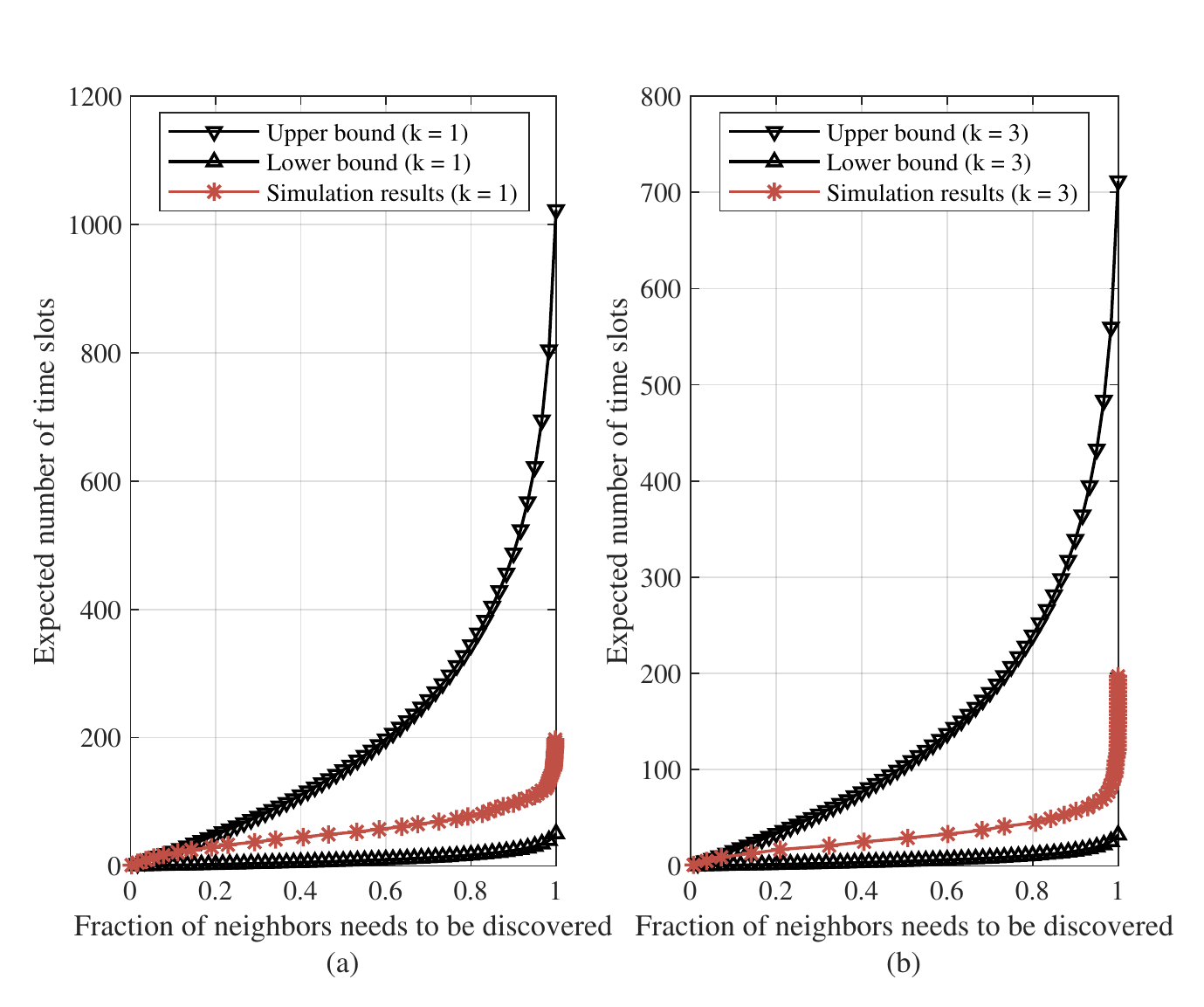}
		\caption{Fig. 8. Simulation results of lower bound and upper bound with GSIM-ND algorithm when $M=150$.}
		\label{fig_8}
	\end{figure}
	
	The distribution of mobile nodes in VANET randomly generated by Monte Carlo simulation is shown in Fig. \ref{fig_6} where the number of mobile nodes $M$ is $150$.
	The small black circles represent the vehicles in VANET, the red triangles represent the RSUs, the large circles with a red dotted line represents the sensing range of RSUs.

	The simulation results of $\overline{N}$ depend on the distribution of random scattered points.
	The theoretical results of the average number of neighbors $\overline{N}$ depend on the average density of mobile nodes and are verified in Fig. \ref{fig_7}(a).
	Since the theoretical results are very close to the simulation results, the correctness of the theoretical results is verified.
	
	In Section \ref{sec_5}, we derive the relation between the expected fraction of discovered neighbors and the number of time slots with GSIM-ND algorithm.
	Fig. \ref{fig_7}(b) shows the theoretical and simulation results of the expected fraction of discovered neighbors versus the number of time slots with GSIM-ND algorithm when the number of mobile nodes $M$ is $150$ and the number of modulation modes $k$ is $1$ and $3$ respectively.
	The theoretical results reveal the variation feature of the simulation results, which proves the rationality and correctness of theoretical derivation.
	
	The relation between the expected bounds of the number of time slots and the fraction of neighbors needs to be discovered with GSIM-ND algorithm is derived in Section \ref{sec_5}.
	Fig. \ref{fig_8} shows the simulation results and theoretical results of the expected number of time slots versus the fraction of neighbors needs to be discovered with GSIM-ND algorithm when the number of mobile nodes $M$ is $150$ and the number of modulation modes $k$ is $1$ and $3$ respectively.
	The simulation results are always between the theoretical lower bound and upper bound, which proves the rationality and correctness of theoretical derivation.
	
	Then, we analyze the efficiency of GSIM-ND based on theoretical results.
	
\subsection{Efficiency of GSIM-ND Algorithm}
	CRA algorithm and SBA algorithm are two classic ND algorithms.
	Many existing researches on ND are proposed on the basis of CRA algorithm and SBA algorithm.
	Gossip-based algorithm can greatly accelerate the convergence of ND.
	Gossip-based algorithm is one of the state-of-the-art (SOTA) methods.
	In this paper, GSIM-ND algorithm is proposed on the basis of CRA schedule and gossip mechanism.
	Therefore, GSIM-ND algorithm is compared with CRA algorithm, SBA algorithm and gossip-based algorithm in this section respectively.
	Moreover, we also compare GSIM-ND algorithms under different parameters of $k$, 
	where the parameter $k$ represents the maximum number of packets that can be received simultaneously.
	
	\begin{figure}[h]
		\centering
		\includegraphics[width=0.5\textwidth]{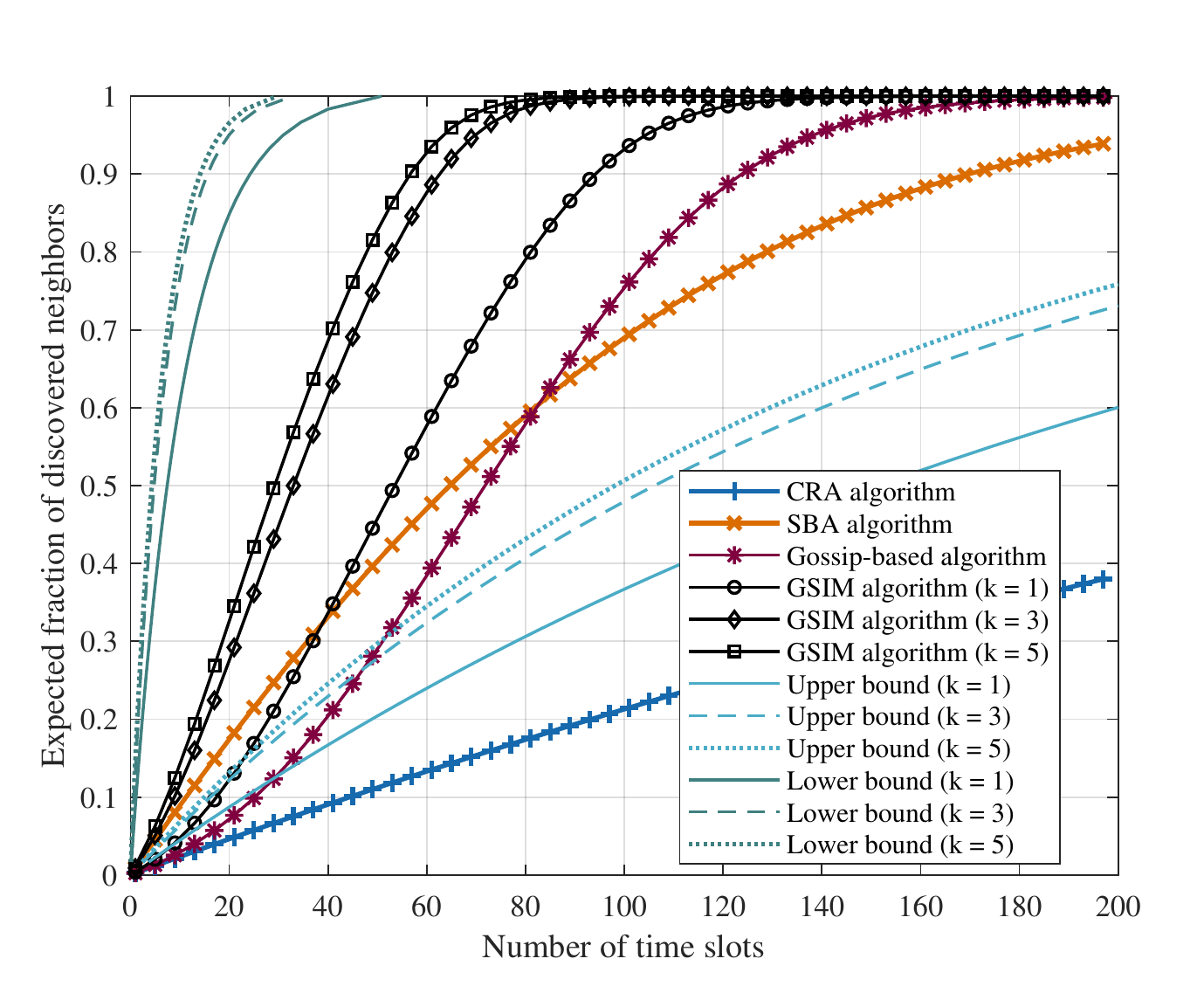}
		\caption{Fig. 9. Expected fraction of discovered neighbors versus the number of time slots with CRA algorithm, SBA algorithm, gossip-based algorithm and GSIM-ND algorithm when $M=150$.}
		\label{fig_9}
	\end{figure}
	\begin{figure}[t]
		\centering
		\includegraphics[width=0.5\textwidth]{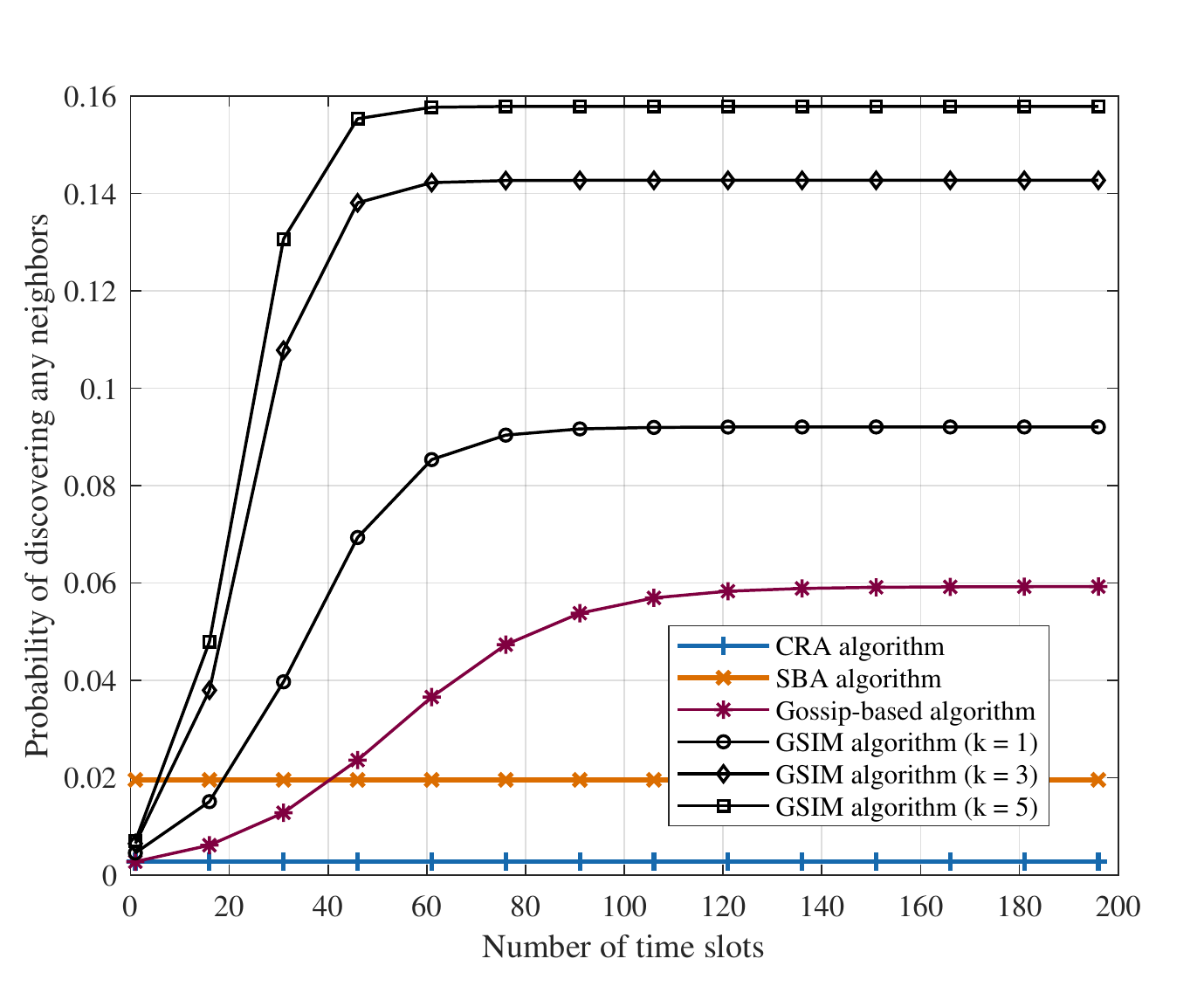}
		\caption{Fig. 10. Probability of discovering any neighbors versus the number of time slots with CRA algorithm, SBA algorithm, gossip-based algorithm and GSIM-ND algorithm when $M=150$.}
		\label{fig_10}
	\end{figure}
	\begin{figure}[t]
		\centering
		\includegraphics[width=0.5\textwidth]{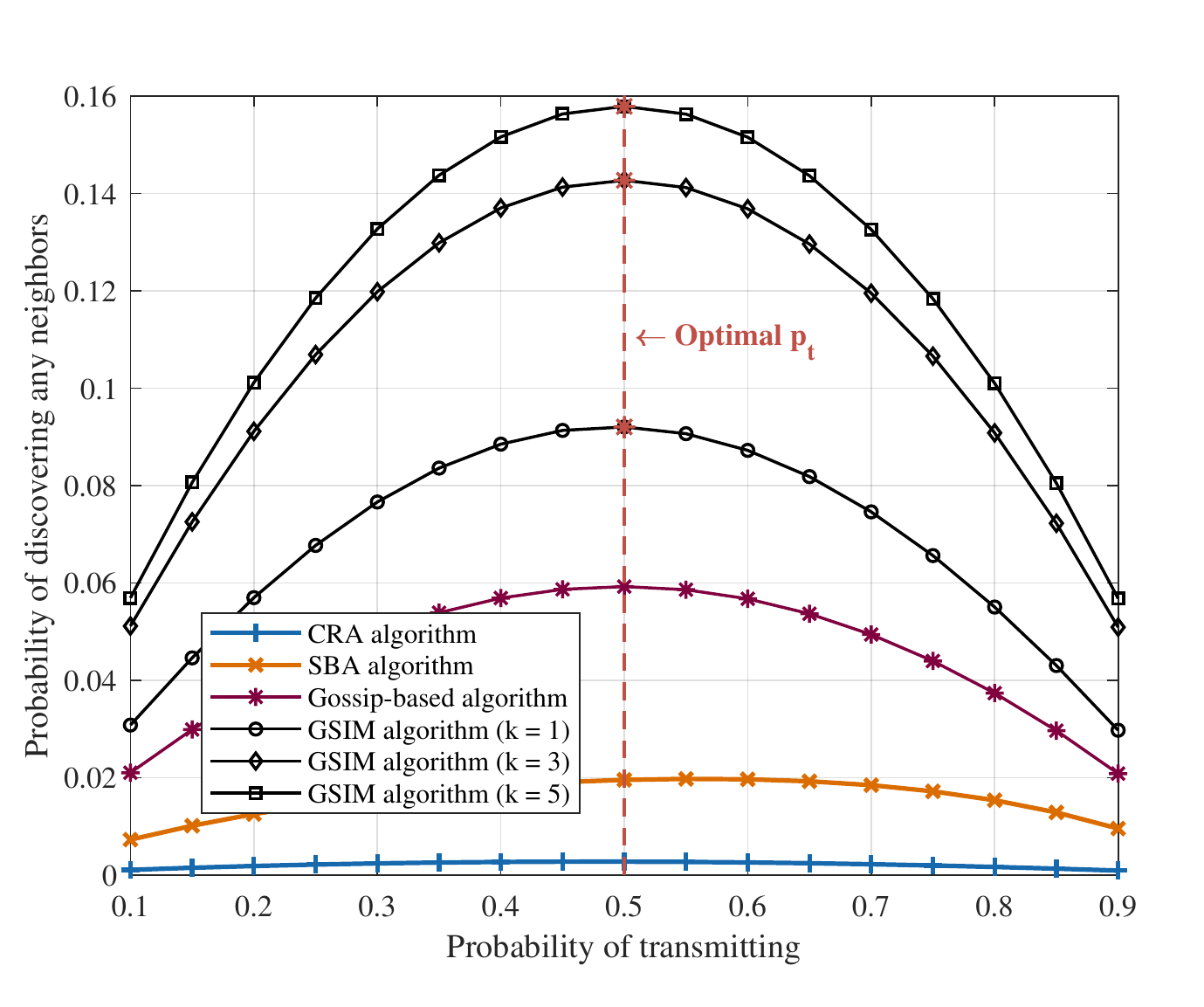}
		\caption{Fig. 11. Probability of discovering any neighbors versus the probability of a mobile node actively transmitting \textit{hello} packet with CRA algorithm, SBA algorithm, gossip-based algorithm and GSIM-ND algorithm when $M=150$.}
		\label{fig_11}
	\end{figure}
	
	Fig. \ref{fig_9} shows the expected fraction of discovered neighbors versus the number of time slots when the number of mobile nodes $M$ is $150$ and the number of modulation modes $k$ is $1$, $3$ and $5$ respectively.
	The GSIM-ND algorithm is better than CRA algorithm, SBA algorithm and gossip-based algorithm.
	The efficiency of the GSIM-ND algorithm is increasing with the increase of $k$.
		
	The factors that directly affects the efficiency of GSIM-ND algorithm, gossip-based algorithm and CRA algorithm (SBA algorithm) are the probability of discovering any neighbors $P_{\rm gs}$ and $P_{\rm s}$, respectively.
	The efficiency of GSIM-ND algorithm and gossip-based algorithm are increasing with the increase of $P_{\rm gs}$.
	The efficiency of CRA algorithm and SBA algorithm are increasing with the increase of $P_{\rm s}$.
	$P_{\rm gs}$ and $P_{\rm s}$ depend on the probability of a mobile node actively transmitting \textit{hello} packet $p_{\rm t}$ and the probability of collisions.
	
	Fig. \ref{fig_10} shows the probability of discovering any neighbors $P_{\rm gs}$ and $P_{\rm s}$ versus the number of time slots with CRA algorithm, SBA algorithm, gossip-based algorithm and GSIM-ND algorithm when the number of mobile nodes $M$ is $150$ and the number of modulation modes $k$ is $1$, $3$ and $5$ respectively.
	At the beginning of algorithms, $P_{\rm gs}$ and $P_{\rm s}$ are the same, but $P_{\rm s}$ doesn't change over time with CRA algorithm and SBA algorithm whereas $P_{\rm gs}$ increases a little bit over time and then remains stable with gossip-based algorithm and GSIM-ND algorithm.
	
	Fig. \ref{fig_11} shows the stable probability of discovering any neighbors $P_{\rm gs}$ and $P_{\rm s}$ versus the probability of a mobile node actively transmitting \textit{hello} packet $p_{\rm t}$ with CRA algorithm, SBA algorithm, gossip-based algorithm and GSIM-ND algorithm when $M$ is $150$ and the number of modulation modes $k$ is $1$, $3$ and $5$ respectively.
	When $p_{\rm t}$ is too small or too large, $P_{\rm gs}$ and $P_{\rm s}$ are small because the probability of a mobile node successfully receiving the packets from the neighbor is small.
	Fig. \ref{fig_11} proves that the optimal $p_{\rm t}$ is 0.5, which results in the maximum $P_{\rm gs}$ and $P_{\rm s}$.
	
	Hence, through such a series of comparisons, 
	the performance improvement of GSIM-ND algorithm compared to CRA algorithm, SBA algorithm and gossip-based algorithm can be intuitively revealed.
	The performance difference of GSIM-ND algorithms corresponding to different parameters $k$ can be intuitively revealed too.
	The efficiency of GSIM-ND algorithm is higher than that of CRA algorithm, SBA algorithm and gossip-based algorithm.
	As the modulation modes $k$ increases, $P_{\rm gs}$ increases slightly.
	These conclusions are consistent with those obtained from Fig. \ref{fig_9}.
	However, since a large $k$ will result in large algorithm complexity, a large $k$ is not always reasonable in practice.
	
\subsection{Stability of GSIM-ND Algorithm}
	In practical application, the density of mobile nodes in VANET is affected by many factors.
	We study the stability of GSIM-ND algorithm by analyzing the influence of the number of mobile nodes $M$ in a certain region on the efficiency of GSIM-ND algorithm.
	
	\begin{figure}[h]
		\centering
		\includegraphics[width=0.5\textwidth]{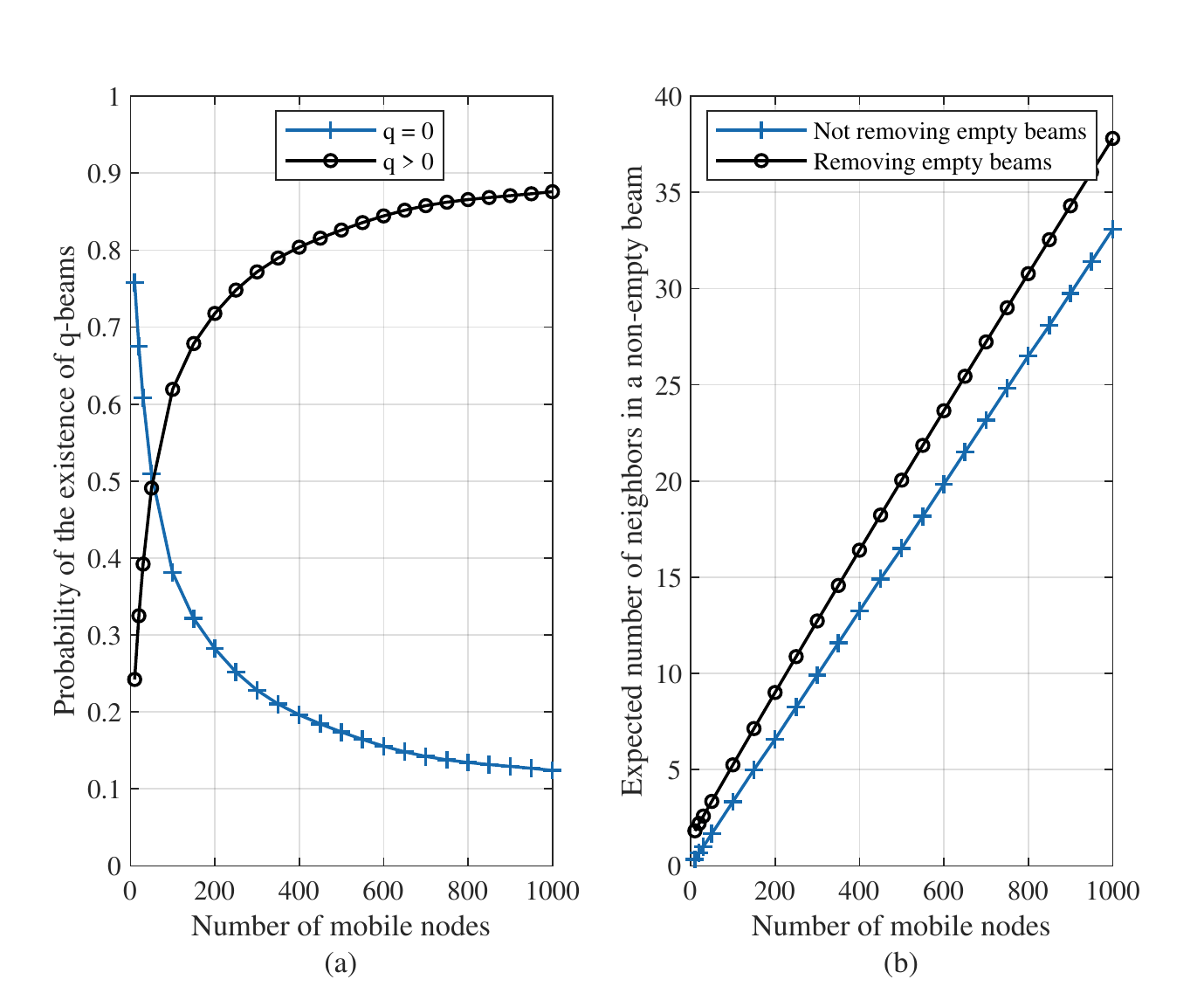}
		\caption{Fig. 12. Probability of the existence of $q$-beams and expected number of neighbors in a non-empty beam.}
		\label{fig_12}
	\end{figure}
	\begin{figure}[t]
		\centering
		\includegraphics[width=0.5\textwidth]{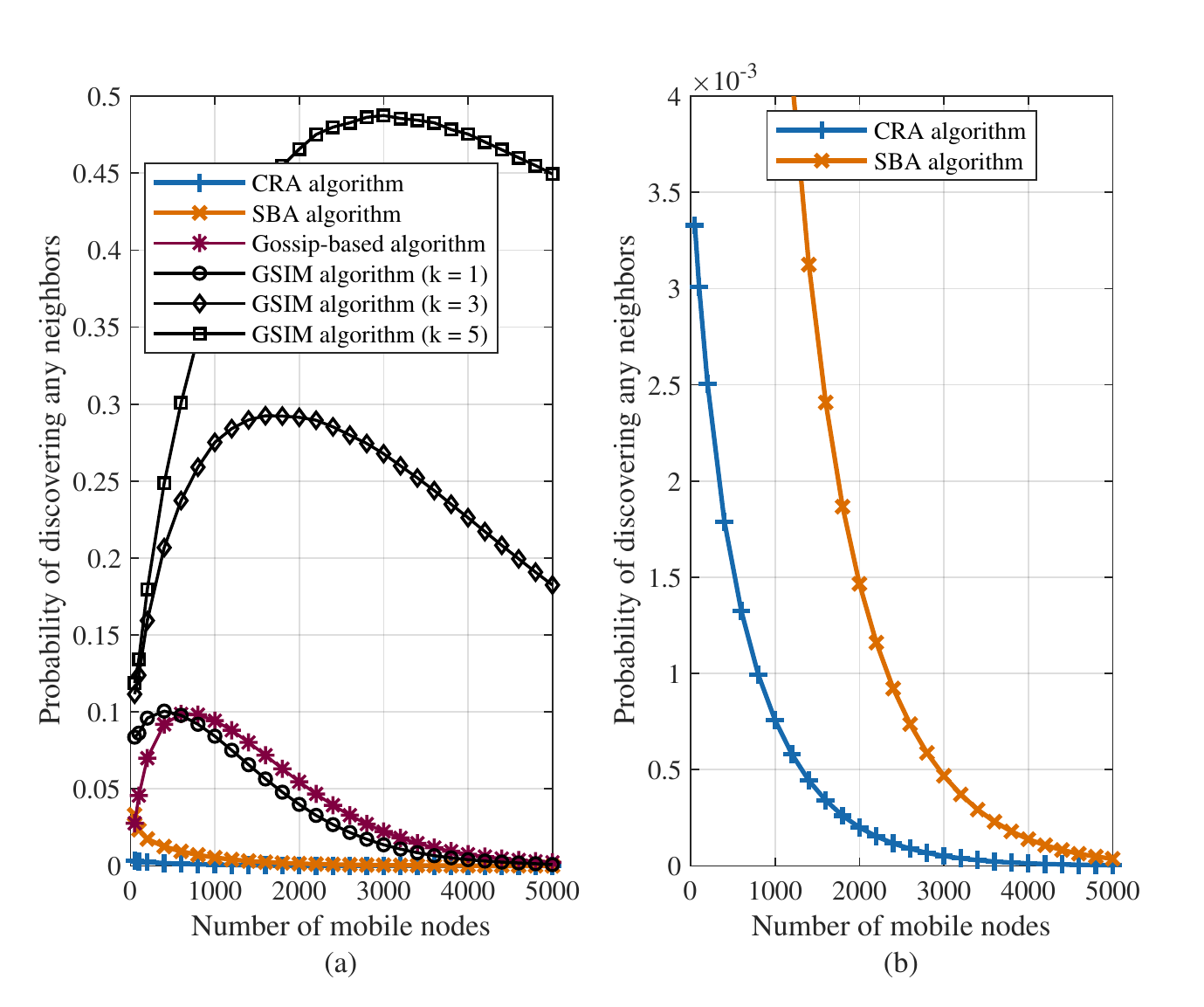}
		\caption{Fig. 13. Probability of discovering any neighbors versus the number of mobile nodes with CRA algorithm, SBA algorithm, gossip-based algorithm and GSIM-ND algorithm.}
		\label{fig_13}
	\end{figure}
	\begin{figure}[b]
		\centering
		\includegraphics[width=0.5\textwidth]{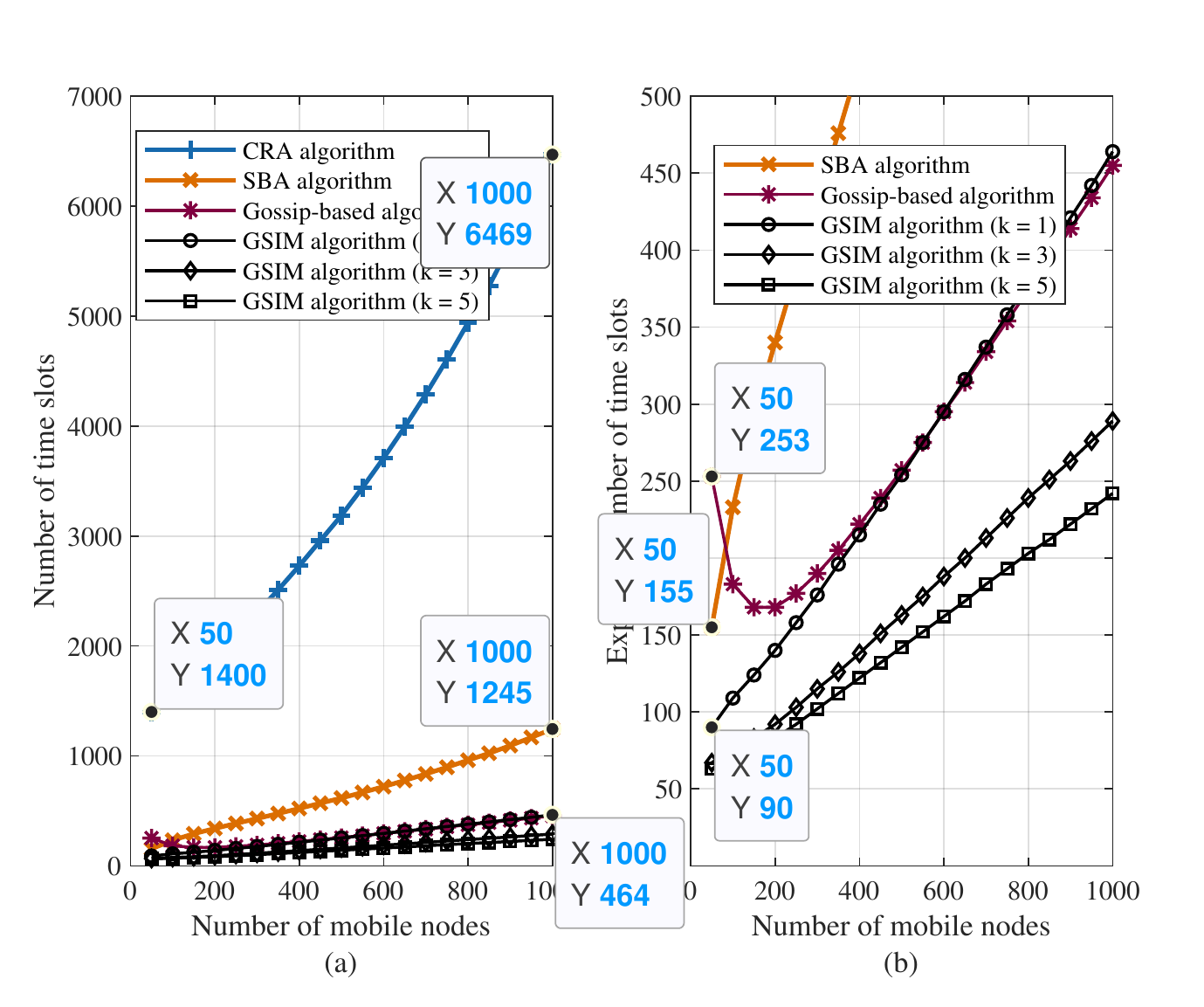}
		\caption{Fig. 14. Number of required time slots when $99\%$ neighbors are discovered versus the number of mobile nodes with CRA algorithm, SBA algorithm, gossip-based algorithm and GSIM-ND algorithm.}
		\label{fig_14}
	\end{figure}
	
	Fig. \ref{fig_12}(a) shows the probability of the existence of different kinds of beams $\frac{E_q}{B}$ versus the number of mobile nodes $M$.
	Fig. \ref{fig_12}(b) shows the expected number of neighbors in a random non-empty beam $\overline{N_{\rm b}}$ versus the number of mobile nodes $M$.
	Since a mobile node can not discover any neighbors when selecting an empty beam, the existence of empty beams has an impact on the time required for ND convergence and the efficiency of ND.
	The influence of empty beams on ND is increasing with the decrease of the number of mobile nodes $M$.
	However, the influence of empty beams on ND can not be ignored when $M$ is large.
	Through the sensing information provided by RSU, mobile nodes can only select non-empty beam for ND.
	Thus, GSIM-ND algorithm can eliminate the influence of empty beams on ND.
	
	Fig. \ref{fig_13} shows the probability of discovering any neighbors $P_{\rm gs}$ and $P_{\rm s}$ versus the number of mobile nodes $M$ with CRA algorithm, SBA algorithm, gossip-based algorithm and GSIM-ND algorithm when $p_{\rm t}$ is optimal, the number of modulation modes $k$ is $1$, $3$ and $5$ respectively.
	$P_{\rm gs}$ increases firstly and then decreases as $M$ increases with GSIM-ND algorithm and gossip-based algorithm.
	This is due to the fact that $P_{\rm gs}$ is increasing firstly with the increase of the number of common neighbors between two mobile nodes $N_{\rm I}$,
	then $P_{\rm gs}$ is decreasing because the collision is increasing with the increase of $N_{\rm I}$.
	$P_{\rm s}$ decreases as $M$ increases with CRA algorithm and SBA algorithm.
	This is due to the fact that the probability of collisions is increasing with the increase of $M$,
	which will result in the decrease of $P_{\rm s}$.
	$P_{\rm gs}$ with GSIM-ND algorithm and gossip-based algorithm are always higher than $P_{\rm s}$ with CRA algorithm and SBA algorithm, $P_{\rm gs}$ with larger $k$ is always higher than that with smaller $k$.
	The difference among different $P_{\rm gs}$ corresponding to different $k$ is more significant with larger $M$.
	
	Fig. \ref{fig_14} shows the number of required time slots when $99\%$ of neighbors are discovered versus the number of mobile nodes $M$ with CRA algorithm, SBA algorithm, gossip-based algorithm and GSIM-ND algorithm when $p_{\rm t}$ is optimal, the number of modulation modes $k$ is $1$, $3$ and $5$ respectively.
	The algorithm convergence time increases as $M$ increases with CRA algorithm and SBA algorithm.
	This is due to the fact that the algorithm convergence time is increasing with the decrease of $P_{\rm s}$, the increase of $M$ and the number of neighbors $N$.
	The algorithm convergence time decreases firstly and then increases as the number of mobile nodes $M$ increases with gossip-based algorithm.
	This is due to the fact that $P_{\rm gs}$ increases firstly and then decreases with the increase of $M$.	
	The algorithm convergence time increases as $M$ increases with GSIM-ND algorithm.
	This is due to the fact that although $P_{\rm gs}$ increases firstly but then decreases with the increase of $M$.
	The decrement of algorithm convergence time due to $P_{\rm gs}$ is smaller than the increment due to $N$.
	Compared with gossip-based algorithm, GSIM-ND algorithm eliminates the influence of empty beam on ND when $M$ is small.
	Compared with CRA algorithm and SBA algorithm, the growth rate of convergence time of GSIM-ND algorithm is much smaller.
	
	\begin{figure*}[b]
		\setcounter{equation}{28}
		\hrulefill
		\begin{equation}\label{eq_29}
		\begin{aligned}
			{\overline{N_{\rm I}}} &= \int_{0}^{d}\int_{0}^{d}\int_{0}^{\sqrt{r^2-(s_2-s_1)^2}}N_{\rm I}(s_1,s_2,s_3)p_1(s_1)p_2(s_2)p_3(s_3)d(s_3)d(s_2)d(s_1)
			\\&=\frac{\rho}{d^2}\int_{0}^{d}\int_{0}^{d}\int_{0}^{\sqrt{r^2-(s_2-s_1)^2}}\int_{0}^{d}\frac{\left(\sqrt{r^{2}-(x-s_2)^{2}}		+\sqrt{r^{2}-(x-s_1)^{2}}	-s_3\right)}{\sqrt{r^2-(s_2-s_1)^2}}dxd(s_3)d(s_2)d(s_1)
		\end{aligned}
		\end{equation}
	\end{figure*}
	
	Regardless of low density or high density network, GSIM-ND algorithm can always maintain a short convergence time.
	Therefore, the stability of GSIM-ND algorithm is better than that of CRA algorithm, SBA algorithm and gossip-based algorithm.
	The stability of GSIM-ND algorithm with larger $k$ is better than that with smaller $k$,
	which is especially obvious when mobile nodes are densely distributed.
	
\section{Conclusion}\label{sec_7}
	This paper proposes GSIM-ND algorithm to satisfy the demands of multi-vehicle fast networking in VANET.	
	The expected number of discovered neighbors with a given period is derived.
	The expected bounds of the number of time slots when a given number of neighbors needs to be directly or indirectly discovered is derived as well.
	The simulation results verify the correctness of theoretical derivation.
	Moreover, we compare GSIM-ND algorithm with CRA algorithm, SBA algorithm and gossip-based algorithm.
	The efficiency and stability of GSIM-ND algorithm, gossip-based algorithm, CRA algorithm and SBA algorithm are discussed based on theoretical derivation and simulation.
	It is discovered that GSIM-ND is more efficient and stable than CRA algorithm, SBA algorithm and gossip-based algorithm.
	The convergence time of ND by GSIM-ND algorithm ($k=1$) is 90\% lower than that of CRA algorithm, 40\% lower than that of SBA algorithm, and 60\% lower than that of gossip-based algorithm when the density of vehicle nodes is low.
	The convergence time of ND by GSIM-ND algorithm ($k=1$) is 90\% lower than that of CRA algorithm, 60\% lower than that of SBA algorithm, and similar to that of gossip-based algorithm when the density of vehicle nodes is high.
	In addition, GSIM-ND algorithm can further reduces the convergence time of ND by adjusting $k$.
	Hence, GSIM-ND algorithm proposed in this paper can adapt to the high requirements of multi-vehicle fast networking better than the CRA algorithm, SBA algorithm and gossip-based algorithm in VANET.
	
	For the future work, we will take a further look at efficient long-term continuous ND to cope with the constant neighbor changes in VANET.

\begin{appendices}
\section{Number of Neighbors in a Non-empty Beam}
	In \cite{8}, the beam contains $q$ neighbors is defined as $q$-beam,
	the beam does not contain any neighbors is defined as empty beam.
	Since the location of mobile nodes can be obtained from the sensing information, most of the mobile nodes choose non-empty beams instead of empty beams with GSIM-ND algorithm.
	
	\cite{8} derived the expected number of neighbors in one beam based on the assumption that the neighbors are uniformly distributed.
	However, due to the restriction of road area in VANET, the neighbors of the mobile nodes in VANET are not uniformly distributed in each beam.
	
	\begin{lemma} \label{lemma4}
		When empty beams are not selected,
		the probability of a mobile node selecting any non-empty beam and the expected number of neighbors in a non-empty beam are respectively
		\setcounter{equation}{21}
		\begin{equation}\label{eq_22}
			\alpha = \frac{1}{B-E_{0}},
		\end{equation}
		\begin{equation}\label{eq_23}
			\overline{N_{\rm b}} = \sum\limits_{q=1}^{N}{\frac{q\cdot E_q}{B-E_0}} = \alpha \cdot \sum\limits_{q=1}^{N}{qE_q}.
		\end{equation}		
	\end{lemma}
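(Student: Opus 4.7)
The plan is to treat both identities as direct consequences of the uniform-on-non-empty-beams selection rule built into GSIM-ND. Let $B$ be the fixed total number of beams obtained from the angular partition of width $\theta$, and let $E_q$ denote the expected number of beams around a mobile node that contain exactly $q$ of its neighbors. Since every beam contains zero or more neighbors, the $q$-beam counts form a partition of $B$ in expectation: $\sum_{q=0}^{N} E_q = B$.

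First I would establish (\ref{eq_22}). Under GSIM-ND the mobile node uses the sensing information broadcast by the RSU to identify which of its $B$ beams are empty and then chooses uniformly among the remaining ones. The expected number of non-empty beams is $B - E_0$, so the probability of picking any particular non-empty beam is $\alpha = 1/(B-E_0)$. This is consistent with the role of $\alpha$ in Lemma \ref{lemma1}, where it multiplies each per-direction transmit/receive event.

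Next I would derive (\ref{eq_23}) by the law of total expectation, partitioning over the type of beam that is chosen. For each $q \geq 1$, the expected number of $q$-beams is $E_q$, so the probability that the selected non-empty beam is a $q$-beam equals $E_q/(B-E_0) = \alpha E_q$; conditional on that event, the beam contains exactly $q$ neighbors. Summing over $q = 1, \ldots, N$ gives $\overline{N_{\rm b}} = \sum_{q=1}^{N} q \cdot \alpha E_q$, which is (\ref{eq_23}).

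The hard part is not the two one-line identities above but the auxiliary computation of $E_q$ (and hence $E_0$) for the VANET geometry, which the lemma implicitly leans on. Because the road width $d$ is comparable to the communication radius $r$, a neighbor is restricted to the lens-shaped intersection of the communication disk with the road strip, and the angular area swept by a given beam inside this lens depends on the beam's orientation and the mobile node's lateral position. Rigorously obtaining $E_q$ therefore requires integrating a binomial-type occupancy probability against the non-uniform spatial density of neighbors beam-by-beam, rather than reusing the uniform-in-a-disk formula of \cite{8}. I expect this geometric integration, rather than the combinatorial bookkeeping of Lemma \ref{lemma4} itself, to absorb the bulk of the technical effort needed to make Appendix A self-contained.
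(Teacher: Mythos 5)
Your derivation of the two identities themselves coincides with the paper's: the authors also obtain (\ref{eq_22}) from a uniform choice over the $B-E_0$ expected non-empty beams and (\ref{eq_23}) from the law of total expectation over the beam type, so the lemma proper is handled the same way (including the same implicit substitution of expectations $E_q$, $E_0$ for the random beam counts inside the ratio). Where you diverge is in the auxiliary computation of $E_q$, which is where the paper's proof actually spends its effort. You propose to integrate binomial occupancy probabilities against the non-uniform spatial density beam-by-beam over the road strip. The paper instead keeps the combinatorial occupancy framework of \cite{8}: it counts the ``situations'' realizing the event that exactly $e$ beams are $q$-beams, via $SC_e^q = NS_e^q\cdot BS_{B-e}^q$ with $NS_e^q$ and $BS_{B-e}^q$ given explicitly, and absorbs the road-geometry non-uniformity into situation-dependent probabilities $P_e^q(s)$, summing $P(e,B,N,q)=\sum_s P_e^q(s)$ and then forming $E_q$ from these as in \cite{8}. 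Both routes address the same obstruction you correctly identify (the uniform-in-a-disk formula of \cite{8} does not apply because $d$ is comparable to $r$); the paper's combinatorial bookkeeping makes the dependence on $B$, $N$, $e$, $q$ explicit and reuses known machinery, while your direct geometric integration would be more self-contained but requires carrying the orientation- and lateral-position-dependent beam areas through the occupancy calculation. Neither the paper nor your sketch fully evaluates $P_e^q(s)$ for the road geometry, so on that point you are no less complete than the original.
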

	\begin{proof}
		The event that the number of $q$-beam is $e$ is called the event Q.		
		The number of situations where the nodes distribution satisfies the event Q is
		\begin{equation}\label{eq_24}
			SC_{e}^{q} = NS_{e}^{q}\cdot BS_{B-e}^{q},
		\end{equation}
		where $NS_{e}^{q}$ is the number of situations that $qe$ different nodes are selected from $N$ different nodes and put into the determined $e$ different beams, and these $e$ beams are $q$-beam.
		$BS_{B-e}^{q}$ is the number of situations where the remaining $N-qe$ different nodes are put into the selected $B-e$ different beams, and these $B-e$ beams are not $q$-beam.
		
		$NS_{e}^{q}$ and $BS_{B-e}^{q}$ are formulated as
		\begin{equation}\label{eq_25}
			NS_{e}^{q} = \binom{N}{qe}\cdot \prod_{w=0}^{e-1}\binom{q(e-w)}{q},
		\end{equation}
		\begin{equation}\label{eq_26}
			BS_{B-e}^{q} = \binom{B}{e}\cdot DS_{B-e}^{q}(N-qe),
		\end{equation}
		where $DS_{b}^{q}(n) = 0$ when $b=0$ or $n=0$.
		The expression of $DS_{b}^{q}(n)$ can be obtained in \cite{8} when $b\neq 0$ and $n\neq 0$.
		
		Then, the probability that the nodes distribution satisfies the event Q is
		\begin{equation}\label{eq_27}
			P(e,B,N,q) = \sum_{s=1}^{SC_{e}^{q}}P_{e}^{q}(s) = NS_{e}^{q}\cdot \sum_{s=1}^{BS_{B-e}^{q}}P_{e}^{q}(s),
		\end{equation}
		where $P_{e}^{q}(s)$ is the probability that the nodes distribution satisfies the $s$-th situation.
		
		The expected number of $q$-beam is defined as $E_q$.
		As discussed in \cite{8}, $E_q$ can be formulated through $P(e,B,N,q)$.
		Then, the probability of selecting any non-empty beam can be formulated in (\ref{eq_22}),
		the expected number of neighbors in a non-empty beam can be formulated in (\ref{eq_23}) when empty beams are not be selected in GSIM-ND Algorithm.
	\end{proof}
	\begin{figure*}[t]	
		\setcounter{equation}{34}
		\begin{equation}\label{eq_35}
			N_{\rm I}(s_1,s_2,s_3)=\rho \cdot \int_{0}^{d}\left(\sqrt{r^{2}-(x-s_2)^{2}}		+\sqrt{r^{2}-(x-s_1)^{2}}-s_3\right)dx
		\end{equation}
		\hrulefill
	\end{figure*}
	
\section{Number of Neighbors and Common Neighbors}	
	\begin{lemma} \label{lemma5}
		The average number of a mobile node's neighbors is
		\setcounter{equation}{27}
		\begin{equation}\label{eq_28}
		\begin{aligned}
			\overline{N} &= \int_{0}^{d}N(s_1)p_1(s_1)d(s_1)\\
			&=\frac{2\rho}{d}\int_{0}^{d}\int_{0}^{d}\sqrt{r^{2}-(x-s_1)^{2}}dxd(s_1).
		\end{aligned}
		\end{equation}
		
		The average number of common neighbors between two mobile nodes is formulated in (\ref{eq_29}).
	\end{lemma}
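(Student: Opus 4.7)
The plan is to view the mobile nodes as uniformly and independently distributed over the road strip $[0,L]\times[0,d]$ at density $\rho$, and to derive both expectations by conditioning on the lateral (cross-road) coordinates of the tagged nodes. Longitudinal translation invariance is used throughout, so I implicitly neglect boundary effects near the two ends of the road, which is justified when $L\gg r$.

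For equation (28) I would first fix the tagged node at lateral position $s_1$, with $p_1(s_1)=1/d$ by uniformity. Slicing the communication disk of radius $r$ centered at that node by lateral coordinate $x\in[0,d]$, the longitudinal chord has length $2\sqrt{r^2-(x-s_1)^2}$ whenever $|x-s_1|\le r$, which holds for all $x\in[0,d]$ under the standing assumption $d<r$. Multiplying by the planar density $\rho$ and integrating $x$ over $[0,d]$ gives $N(s_1)=2\rho\int_0^d\sqrt{r^2-(x-s_1)^2}\,dx$, after which averaging over $s_1$ with weight $1/d$ yields the claimed double integral.

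For equation (29) I would condition on the lateral positions $s_1,s_2$ of nodes $i$ and $j$, and on the nonnegative longitudinal separation $s_3=|x_j-x_i|$. Since $i,j$ are neighbors, $(s_2-s_1)^2+s_3^2\le r^2$, so $s_3$ ranges over $[0,\sqrt{r^2-(s_2-s_1)^2}]$; a symmetry-reduced uniform argument then supplies $p_3(s_3)=1/\sqrt{r^2-(s_2-s_1)^2}$. For a third node at lateral coordinate $x\in[0,d]$ to be a common neighbor, its longitude must lie in the overlap of two chords centered at the longitudes of $i$ and $j$ with half-widths $a=\sqrt{r^2-(x-s_1)^2}$ and $b=\sqrt{r^2-(x-s_2)^2}$. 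A direct interval-overlap calculation gives length $a+b-s_3$ whenever this quantity is nonnegative and $s_3\ge|a-b|$. Multiplying by $\rho$, integrating over $x\in[0,d]$, and then taking the expectation against $p_1p_2p_3$ reproduces equation (29) via equation (35).

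The main obstacle I expect is justifying the final interval-overlap step: in principle the length can be either $a+b-s_3$ or $2\min(a,b)$ depending on whether one chord contains the other, and the formula in (35) uses only the first branch. I would need to verify that, in the parameter regime of interest (road width smaller than communication radius, and $s_3$ bounded by $\sqrt{r^2-(s_2-s_1)^2}$), the second branch either does not arise or contributes negligibly to the triple integral, so that (35) is a faithful expression for $N_{\rm I}(s_1,s_2,s_3)$. Once that case analysis is dispatched, the remaining steps are mechanical and follow directly from the uniform densities $p_1,p_2,p_3$ identified above.
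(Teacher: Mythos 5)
Your proposal follows essentially the same route as the paper: uniform node density $\rho=M/(Ld)$, conditioning on the lateral coordinates with $p_1=p_2=1/d$ and $p_3(s_3)=1/\sqrt{r^2-(s_2-s_1)^2}$, chord-length slicing over $x\in[0,d]$ to get $N(s_1)$, and an interval-overlap computation yielding the integrand of (\ref{eq_35}). The caveat you raise about the $2\min(a,b)$ branch of the overlap is legitimate, but the paper sidesteps it by stating only that $N_{\rm I}(s_1,s_2,s_3)$ ``can be approximately formulated'' in (\ref{eq_35}), so no further case analysis is supplied there either.
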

	
	\begin{proof}
		We need to assume a distribution of nodes when we analyze the performance of algorithm.
		Uniform distribution of nodes corresponds to the worst performance of ND algorithm since the empty beams exist with small probability.
		By analyzing and optimizing the algorithm based on Uniform distribution, we can enhance the robustness of the algorithm.
		Thus, we study the case that the mobile nodes are uniformly distributed.
		The algorithm proposed in this paper is valid for different distributions of node.
		
		The average density of mobile nodes in the network is
		\setcounter{equation}{29}
		\begin{equation}\label{eq_30}
			\rho = \frac{M}{Ld},
		\end{equation}
		where $M$ is the number of mobile nodes in the network, $L$ and $d$ are the road length and road width, respectively.
		In practice, $\rho$ can be sensed by the RSUs.
		
		\begin{figure}[h]
			\centering
			\includegraphics[width=0.5\textwidth]{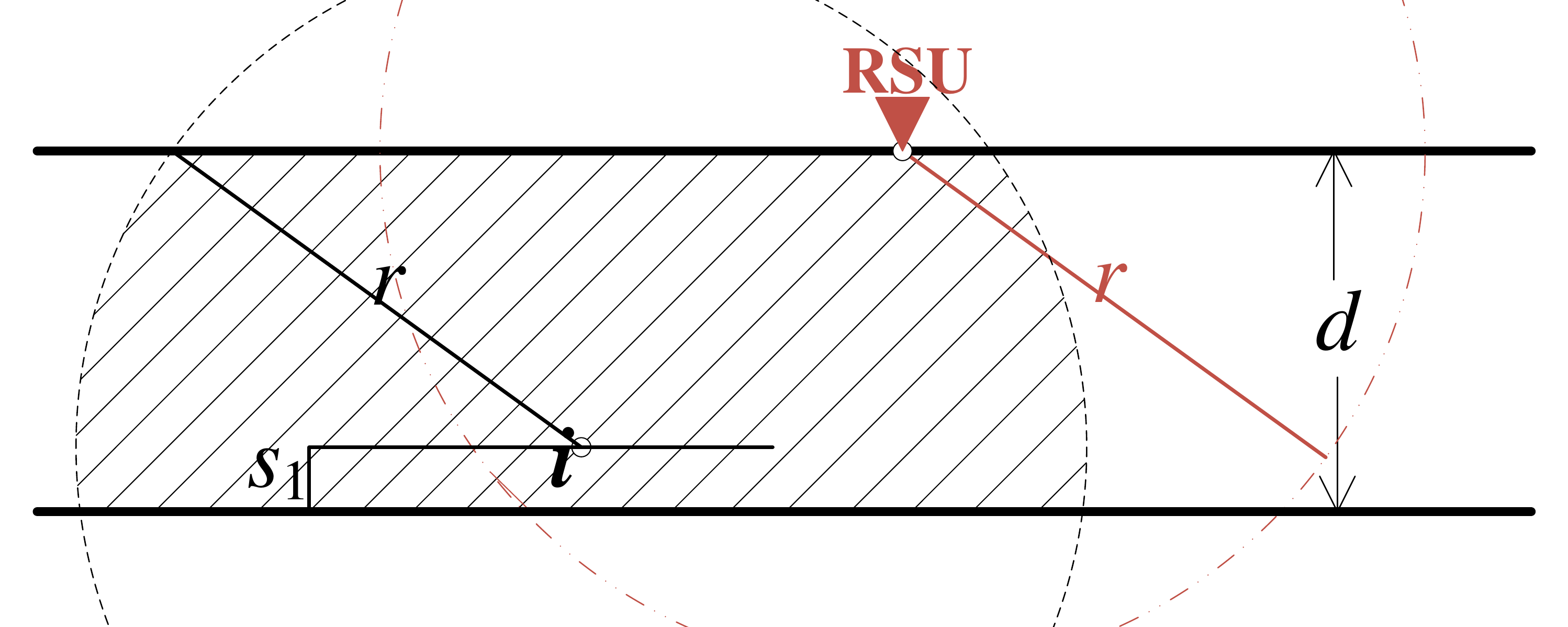}
			\caption{Fig. 15. Neighbor scope of RSU and mobile node.}
			\label{fig_15}
		\end{figure}
	
		As illustrated in Fig. \ref{fig_15}, the PDF of the distance from the mobile node \textit{i} to the edge of the road being $s_1$ is
		\begin{equation}\label{eq_31}
			{p_1(s_1)} = \frac{1}{d}.
		\end{equation}
		
		The number of neighbors of the mobile node \textit{i} in the shaded area shown in Fig. \ref{fig_15} can be approximated as
		\begin{equation}\label{eq_32}
			N(s_1)=\rho \cdot 2\int_{0}^{d}\sqrt{r^{2}-(x-s_1)^{2}}dx,
		\end{equation}
		where $r$ is the communication radius.
		
		The average number of neighbors $\overline{N}$ can be obtained by calculating the expectation of variable $N(s_1)$ when the value of $s_1$ is between the interval $[0,d]$.
		
		\begin{figure}[h]
			\centering
			\includegraphics[width=0.5\textwidth]{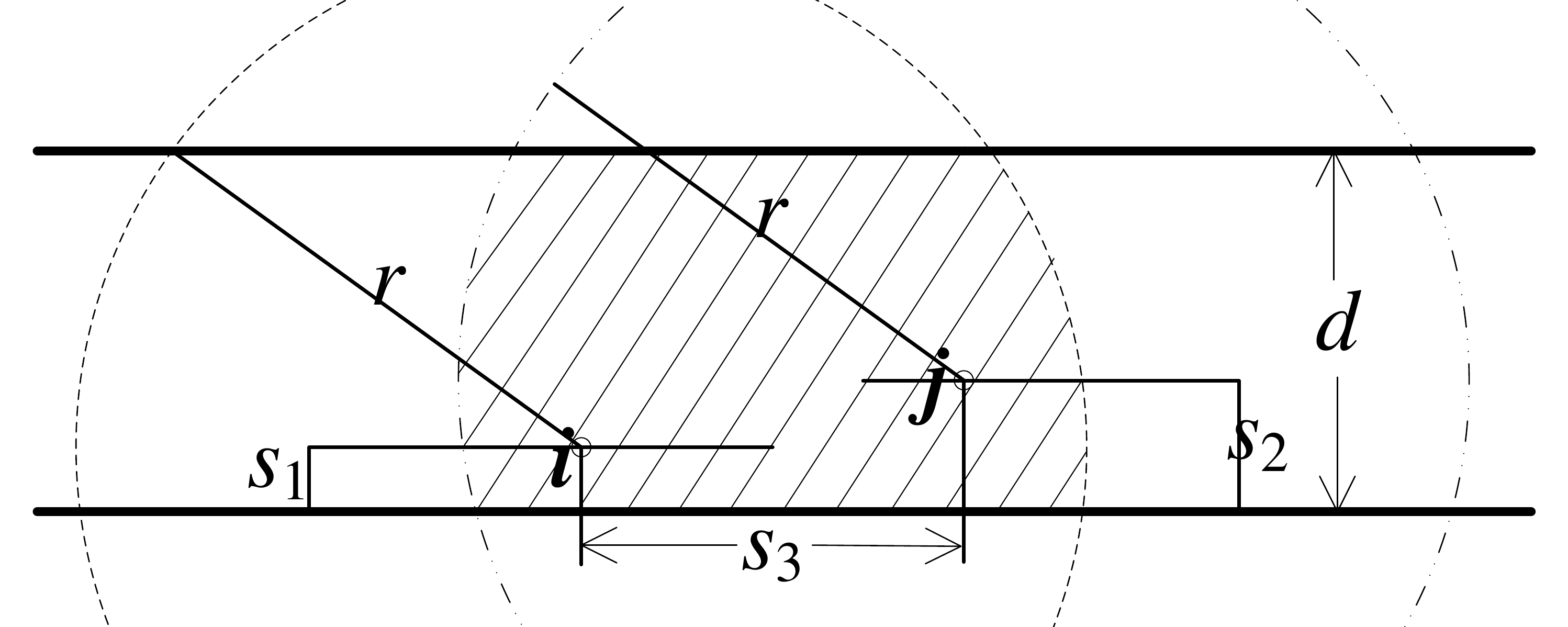}
			\caption{Fig. 16. The overlapped area between the neighborhoods of mobile node \textit{i} and \textit{j}.}
			\label{fig_16}
		\end{figure}
	
		As illustrated in Fig. \ref{fig_16}, the PDF of the distance from the mobile node \textit{j} to the edge of the road being $s_2$ is
		\begin{equation}\label{eq_33}
			{p_2(s_2)}={p_1(s_1)} = \frac{1}{d}.
		\end{equation}
		
		On the premise that the mobile node \textit{i} and mobile node \textit{j} are neighbors, the probability of the longitudinal distance component between the mobile node \textit{i} and \textit{j} being $s_3$ is
		\begin{equation}\label{eq_34}
			{p_3(s_3)} = \frac{1}{\sqrt{r^2-(s_2-s_1)^2}}.
		\end{equation}		
		
		The number of common neighbors between the mobile node \textit{i} and \textit{j} in the shaded area shown in Fig. \ref{fig_16} can be approximately formulated in (\ref{eq_35}).
		
		The average number of common neighbors $\overline{N_{\rm I}}$ can be obtained by calculating the expectation of the variable $N_{\rm I}(s_1,s_2,s_3)$ when the value of $s_1$ is between the interval $[0,d]$, the value of $s_2$ is between the interval $[0,d]$ and the value of $s_3$ is between the interval $[0,\sqrt{r^2-(s_2-s_1)^2}]$.
	\end{proof}
\end{appendices}

\end{document}